\documentclass[runningheads]{llncs}
\usepackage[T1]{fontenc}
\usepackage{graphicx}
\usepackage{breakcites}
\usepackage[colorlinks=true,breaklinks=true]{hyperref}
\usepackage{url}
\usepackage{color}

\usepackage{booktabs}   %
\usepackage{subcaption} %
\usepackage{listings}
\usepackage{relsize}
\usepackage{enumerate}
\usepackage{float}
\usepackage{nicefrac}
\usepackage{multirow}
\usepackage{microtype}
\usepackage{wrapfig}
\usepackage{amsmath}
\usepackage{amssymb}
\usepackage{mathtools}
\usepackage{xcolor}

\usepackage[algosection,ruled,vlined,linesnumbered,algo2e]{algorithm2e}
\usepackage{wrapfig}
\usepackage{algorithmic}
\usepackage[framemethod=tikz]{mdframed}
\usepackage{gensymb}
\usepackage[tableposition=below]{caption}
\captionsetup[table]{skip=5pt}

\usepackage{bussproofs}
\newcommand{\bracket}[1]{\langle #1 \rangle}  

\usepackage[subtle]{savetrees}

\definecolor{blue2}{rgb}{0.0, 0.5, 1.0}

\newif\ifdraft
\drafttrue %

\ifdraft
    \newcommand{\ravi}[1]{\textit{\color{blue2}[Ravi: #1]}}
    \newcommand{\corina}[1]{\textit{\color{blue2}[Corina: #1]}}
    \newcommand{\todo}[1]{\textit{\color{red}[#1]}}
    
\else
    \newcommand{\ravi}[1]{}
    \newcommand{\corina}[1]{}
    \newcommand{\todo}[1]{}
\fi

\newcommand{\paths}{\texttt{Paths}}
\newcommand{\unsafe}{\texttt{Unsafe}}
\newcommand{\safe}{\texttt{Safe}}

\newcommand{\ppsysmdp}{\Gamma_{Perf}}
\newcommand{\shieldppsysmdp}{\Gamma_{Perf}^{\Delta}}
\newcommand{\shieldcpsysmdp}{\Gamma_{Conf}^{\Delta}}
\newcommand{\shieldcpprsysmdp}{\overline{\Gamma}_{Conf}^{\Delta}}
\newcommand{\sysmdp}{\Gamma}

\DeclareMathOperator{\argmin}{argmin}

\definecolor{prismgreen}{rgb}{0, 0.6, 0}

\lstdefinelanguage{Prism}{ %
basicstyle=\color{red}\scriptsize\ttfamily, %
keywords={bool,C,ceil,const,ctmc,double,dtmc,endinit,endmodule,endrewards,endsystem,F,false,floor,formula,G,global,I,init,int,label,max,mdp,min,module,nondeterministic,P,Pmin,Pmax,prob,probabilistic,R,rate,rewards,Rmin,Rmax,S,stochastic,system,true,U,X},
keywordstyle={\bfseries\color{black}},
numberstyle=\tiny\color{black},
comment=[l] {//}, morecomment=[s]{/*}{*/}, %
commentstyle= \color{prismgreen}, %
tabsize=4, %
captionpos=b, %
escapechar=@, %
literate={->}{$\rightarrow{}$}{1}
}

\makeatletter
\newcommand{\printfnsymbol}[1]{%
  \textsuperscript{\@fnsymbol{#1}}%
}
\makeatother

\begin{document}

\title{Conformal Safety Shielding\\ for Imperfect-Perception Agents}

\titlerunning{Conformal Safety Shielding for Imperfect-Perception Agents}
\author{William Scarbro\thanks{Equal contribution}\inst{1} \and
Calum Imrie\printfnsymbol{1}\inst{2} \and
Sinem Getir Yaman\inst{2} \and
Kavan Fatehi\inst{2} \and \\
Corina P\u{a}s\u{a}reanu\inst{3} \and
Radu Calinescu\inst{2} \and
Ravi Mangal\thanks{Corresponding author (ravi.mangal@colostate.edu)}\inst{1}
}
\authorrunning{W. Scarbro, C. Imrie et al.}
\institute{Colorado State University, USA \and
University of York, UK \and
Carnegie Mellon University, USA}
\maketitle              %
\begin{abstract}

We consider the problem of safe control in discrete autonomous agents that use learned components for \emph{imperfect} perception (or more generally, state estimation) from high-dimensional observations. 
We propose a \emph{shield} construction that provides run-time safety guarantees  under perception errors by restricting the actions available to an agent, modeled as a Markov decision process, as a function of the state estimates. Our construction uses \emph{conformal prediction} for the perception component, which guarantees that for each observation, the predicted set of estimates includes the actual state with a user-specified probability. The shield allows an action only if
it is allowed for {\em all} the estimates in the predicted set, %
resulting in \emph{local safety}.
We also articulate and prove a \emph{global safety} property of existing shield constructions for perfect-perception agents bounding the probability of reaching unsafe states if the agent always chooses actions prescribed by the shield. 
We illustrate our approach with a case-study of an experimental autonomous system that guides airplanes on taxiways using high-dimensional perception DNNs.

\keywords{Imperfect Perception \and Shielding \and Conformal Prediction.}
\end{abstract}
\setcounter{footnote}{0} 

\section{Introduction}
\label{sec:intro}

The computational capabilities unlocked by neural networks have made it feasible to build autonomous agents that use learned neural components to interact with their environments for achieving complex goals.
As an example, consider an autonomous airplane taxiing system that senses the environment through a camera and is tasked with following the center-line on taxiways~\cite{KadronGPY21,fremont2020formal}.
The agent \emph{observes} the environment through sensors, \emph{perceives} its underlying state with respect to its environment (position on the taxiway) from the sensor readings via neural components, \emph{chooses} actions (go left, right, or straight) based on the perceived state using either symbolic or neural components, and \emph{executes} the actions to update the agent state.
The use of neural components for perception can cause the agents to make mistakes when perceiving the underlying state from the sensor readings. We refer to this phenomenon as \emph{imperfect perception}. 

Autonomous agents with imperfect perception are often intended to be deployed in safety-critical settings. Accordingly, one would like to ensure that all actions taken by the agent come with safety guarantees. 
Shielding~\cite{jansen2019safe,Konighofer2020LJB} is a promising technique that restricts the set of actions available to the agent at run-time in a manner that guarantees safety. Existing methods, however, either assume that perception is \emph{perfect}~\cite{jansen2019safe} or they require mathematically modeling the complex processes of generating observations and perceiving them~\cite{carrNJT2023safe,sheng2024safe}.

\paragraph{\textbf{MDP Formulation.}} We model agents and their environments together as Markov Decision Processes (MDPs) with safety expressed as a specification in a probabilistic temporal logic. Our MDP formulation is generic and can model any discrete agent that follows the observe-perceive-choose-execute loop while exhibiting stochasticity and non-determinism. Our formulation accounts for the fact that the agent can only access the underlying state through stochastic observations by defining a state of the MDP as a triple of the actual agent state, the observation corresponding to the state, and the estimate of the actual state.

\paragraph{\textbf{Safety through Conformal Perception and Shielding.}} In this work, we enforce safety for imperfect-perception agents by constructing, either offline~\cite{jansen2019safe} or online~\cite{konighofer2023online}, a \emph{conformal shield} that restricts 
the set of available actions in each state of the MDP as a function of the perceived or estimated states. The shield is independent to the actual state and observations resulting from it.
Importantly, constructing the shield does not require us to model either the complex generative process mapping actual states to observations or the perception process that maps observations to estimates. Our key insight is to leverage existing methods~\cite{pranger2021tempest,jansen2019safe} to construct shields for perfect-perception MDPs, i.e., MDPs modeling agents that are able to perceive the underlying states from observations without any errors (thereby making it unnecessary to model the observation and perception processes). Given such a shield, our next idea is to \emph{conformalize}~\cite{angelopoulos2021gentle,vovk2005algorithmic} the neural perception component so that, given an observation (for instance, an image), it predicts a set of state estimates instead of a single estimate of the underlying actual state. This set is guaranteed to contain the actual state with a user-specified probability, assuming that training and future data are \emph{exchangeable}---a weaker variant of the i.i.d. assumption. 
Our final step is to use the perfect-perception shield to compute the set of actions that it deems safe with respect to all the states in a predicted set. Since the actual state is guaranteed to be in the predicted set with a user-specified probability (due to conformalization),
the resulting set of actions comes with
a notion of safety, even under imperfect perception.

\begin{figure}[t]
\centering
\includegraphics[width=\columnwidth]{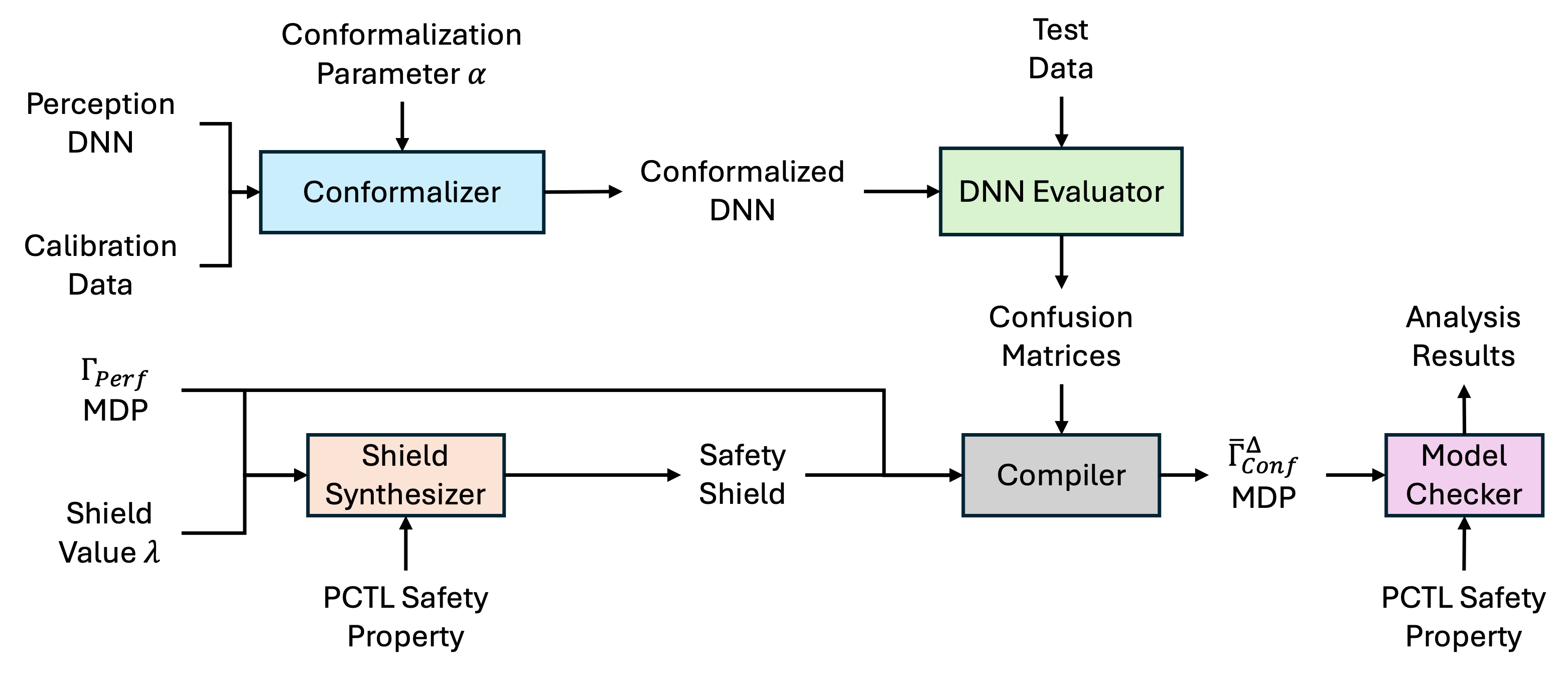} 
\caption{Overall pipeline for our approach. $\ppsysmdp$ refers to the MDP formulated for the perfect-perception agent (Section~\ref{sec:PP_MDP_Shielded}) while $\shieldcpprsysmdp$ refers to the MDP for the imperfect-perception agent with the conformalized shield and a probabilistic abstraction of the observation and perception processes (Section~\ref{sec:Abs_MDP_Shielded}).  }
\label{fig:pipeline}
\end{figure}

Figure~\ref{fig:pipeline} gives an overview of our approach. A \emph{conformalizer} constructs the conformalized perception deep neural network (DNN) from the original perception DNN using calibration data in the form of (actual state, observations) pairs. A \emph{shield synthesizer} is used to construct a shield for the perfect-perception agent modeled as the MDP $\ppsysmdp$. The synthesized shield and the conformalized perception can then be combined to construct our conformal shield suitable for imperfect-perception agents. Although our shield comes with a local safety guarantee, in our empirical evaluation, we also measure the overall safety of the shielded imperfect-perception agent via model checking. To do so, we formulate an MDP ($\shieldcpprsysmdp$) representing the shielded, imperfect-perception agent. Due to the complexity of the observation and perception processes, they are replaced in $\shieldcpprsysmdp$ with a probabilistic abstraction that maps actual states to a distribution over sets of estimated states. This abstraction is constructed using existing methods~\cite{calinescu2023IMRPSV,PasareanuMGYICY23} that consume a confusion matrix reporting the performance of the perception component. The \emph{compiler} puts all the pieces together to generate the $\shieldcpprsysmdp$ MDP from the $\ppsysmdp$ MDP, the perfect-perception shield,  and the confusion matrices representing the performance of the conformalized perception.

\paragraph{\textbf{Safety Guarantees.}} Our shield construction 
provides finite-horizon, probabilistic safety properties
where safety is characterized by a set of states to be avoided. 
In particular, our construction aims for a notion of \emph{local safety}---from any state of the imperfect-perception MDP, if one takes actions as prescribed by the shield, then from the resulting successor states, there very likely exist policies\footnote{A map from states to actions.} (or strategies or controllers) such that finite-horizon probabilistic safety is guaranteed, even though the shield does not have access to the actual state.
We also precisely express and prove the \emph{global safety} guarantees granted by existing shielding methods for perfect-perception MDPs when considering finite-horizon, probabilistic safety properties. A global guarantee ensures that, starting from the initial state, if the agent always chooses actions prescribed by the shield, then the probability of reaching unsafe states in a finite number of steps can be bounded. Our proof also highlights that, in general, shields can cause agents---even under perfect perception---to reach \emph{stuck} states, i.e., states where the shield is empty. 

\paragraph{\textbf{Case Study.}} We implement our approach and evaluate it using the case study of autonomous airplane taxiing system. 
We also evaluate the effect of different hyperparameters associated with conformalization and shielding on the safety outcomes. %
We show that our shield can reduce the probability of imperfect-perception agents reaching unsafe states compared to shields constructed assuming perfect perception. 

\paragraph{\textbf{Contributions.}} In summary, we make the following contributions: (i) a formal model of imperfect-perception agents as MDPs; (ii) a shield construction for imperfect-perception MDPs that does not require a precise model of the observation and perception processes
(iii) statement and proof of global safety guarantees granted by existing shield constructions for perfect-perception agents; (iv) an implementation and evaluation of our ideas via a case study.

\section{Theoretical Foundations}
\label{sec:theory}
In this section, we give MDP formulations of discrete autonomous agents with perfect as well as imperfect perception, with and without shields. We also state and prove local and global safety guarantees for the shielded agents.

\subsection{Imperfect-Perception MDP}

In this work, an autonomous system (or agent) is a tuple $(S,O,A,\iota,\rho,\eta,\omega)$ where $S$ is the finite system state space, $O$ is the finite observation space (for instance, the space of images of a fixed size), $A$ is the finite action space, $\iota \in S$ is the initial system state,  $\omega:S \times O \rightarrow [0,1]$ is the observation function that maps actual states to observations, $\rho:O\rightarrow S$ is the perception function mapping observations to state estimates, and $\eta:S\times A\times S \rightarrow [0,1]$ describes the system dynamics. For technical convenience, we assume $O$ has a \emph{null} element (denoted as $\bot_O$).

The system can be formulated as a Markov Decision Process (MDP), $\sysmdp := (\hat{S},A,\hat{\iota}, P)$, where $\hat{S} := S \times O \times S$ is the set of MDP states, $\hat{\iota} := \iota \times \bot_O \times \iota$ is the initial MDP state, and $P:\hat{S}\times A \times \hat{S}\rightarrow [0,1]$ is the (partial) probabilistic transition function defined as:
\begin{equation}
\begin{array}{ll}
P := & \forall s,s',s'' \in S, \forall o,o' \in O. \\
& (s,o,s') \xrightarrow[]{a' \in A,~\eta(s,a',s'')\cdot\omega(s'',m')} (s'',m',\rho(m'))
\end{array}
\end{equation}
where $\hat{s} \xrightarrow[]{a,~p}\hat{s}'$ denotes $P(\hat{s},a,\hat{s}'):=p$.
A controller (or policy or strategy) $\pi:\hat{S} \rightarrow A$ fixes the action to take in each state of the MDP. Given a policy $\pi$ for an MDP $\sysmdp$, the set of all possible system \emph{paths}, starting from a state $\hat{s}$ is given by
$\paths_{\sysmdp,\pi}(\hat{s}):=\{(\hat{s}_t)_{t=0}^{\infty} \in \hat{S}^*~|~\forall t\in\mathbb{N}.~P(\hat{s}_t,\pi(\hat{s}_t),\hat{s}_{t+1})>0~\wedge~\hat{s}_0=\hat{s}\}$. $\sysmdp$ with a fixed $\pi$ gives rise to a probability space over the set $\paths_{\sysmdp,\pi}(\hat{s})$ of system paths from the state $\hat{s}$. We use $\mathbb{P}_{\sysmdp,\pi}^{\hat{s}}$ to denote the probability measure associated with this space.
The probability of a finite path $p \in \paths_{\sysmdp,\pi}(\hat{s})$ consisting of states $(\hat{s}_0,...,\hat{s}_{n-1})$ may be calculated from the transition function $P$ and the policy $\pi$ as, 
\begin{equation}
\label{eq:path_prob}
    \mathbb{P}_{\sysmdp,\pi}^{\hat{s}}(p) = \prod_{t=0}^{n-2} P(\hat{s}_t,\pi(\hat{s}_t),\hat{s}_{t+1})
\end{equation}
For a set of paths $\textbf{p}$ which start from $\hat{s}$, one may compute the probability of observing any path within this set by summing over the probability of all paths within $\textbf{p}$ as,
\begin{equation}
\label{eq:path_set_prob}
    \mathbb{P}_{\sysmdp,\pi}^{\hat{s}}[\textbf{p}] = \sum_{p\in \textbf{p}} \mathbb{P}_{\sysmdp,\pi}^{\hat{s}}(p)
\end{equation}
\paragraph{Safety Property.} We are interested in finite and infinite horizon probabilistic safety properties. Given a set of unsafe system states $S_U \subseteq S$, the set of unsafe states in MDP $\sysmdp$ is given by $\hat{S}_U := \{\hat{s} \in \hat{S}~|\forall s\in S_U. ~\hat{s}=(s,\_,\_)\}$ and the set of paths that reach unsafe states within $n$ steps from a state $\hat{s}$ is $\unsafe_{\hat{S}_U,n}(\hat{s}) := \{(\hat{s}_t)_{t=0}^{\infty} \in \paths_{\sysmdp,\kappa}(\hat{s})~|~\exists t\leq n.~\hat{s}_t \in \hat{S}_U\}$. The system is \emph{safe} with respect to a finite horizon of length $n$ and probability $p$ if, 
\begin{equation}
\label{eqn:syssafe}
\max_{\pi \in \hat{S}\rightarrow A} \mathbb{P}_{\sysmdp,\pi}^{\hat{\iota}}[\unsafe_{\hat{S}_U,n}(\hat{\iota})] \leq p
\end{equation}
A safety property $\psi$ of this form can be expressed in Probabilistic Computational Tree Logic (PCTL) as the property $\mathbb{P}_{\leq p} [F^{\leq n} \hat{S}_U]$ where $\mathbb{P}$ is PCTL's probabilistic operator, $F$ is the eventually modality, and, abusing notation, $\hat{S}_U$ is the proposition that the MDP state is in the set $\hat{S}_U$. We write $\sysmdp,\hat{s} \models \phi$ to refer to the fact that the set of paths of the MDP $\sysmdp$ starting from state $\hat{s}$ satisfy the property $\psi$. If the MDP $\sysmdp$ satisfies Eqn.~\ref{eqn:syssafe}, then $\sysmdp,\hat{\iota} \models \psi$.

\subsection{Perfect-Perception MDP with Shield}
\label{sec:PP_MDP_Shielded}

An autonomous system is a \emph{perfect-perception} system if 
\begin{equation}
\forall s\in S.~\forall o \in \{o'\in O~|~\omega(s,o')>0\}.~\rho(o)=s 
\end{equation}
Intuitively, this says that the agent is always able to perceive the underlying actual state from the observations.
Under this condition, we can define the MDP  corresponding to the perfect-perception variant of the autonomous system as $\ppsysmdp := (S,A,\iota, P)$ where $P:S\times A \times S\rightarrow [0,1]$ is defined as:
\begin{equation}
\label{eqn:perf_p_def}
\begin{array}{ll}
P := & \forall s,s' \in S. \\
& (s) \xrightarrow[]{a \in A,~\eta(s,a,s')} (s')
\end{array}
\end{equation}
Note that this MDP does not need to track the observations and state estimates due to perfect perception.
Given the perfect-perception MDP $\ppsysmdp$ and a PCTL safety property $\psi$ of the form $F^{\leq n}\hat{S}_U$, a \emph{shielded} perfect-perception MDP $\shieldppsysmdp$ is the tuple $(S,A,\iota, P)$ where $P:S\times A \times S\rightarrow [0,1]$ is defined as:

\begin{equation}
\begin{array}{ll}
P := & \forall s,s' \in S. \\
& (s) \xrightarrow[]{a \in \Delta_{\psi,\lambda}(s),~\eta(s,a,s')} (s')
\end{array}
\end{equation}
where $\Delta_{\psi,\lambda}: S \rightarrow 2^A$ is referred to as the shield and it restricts the actions available in the MDP in a state-dependent manner. In this work, we only consider \emph{absolute} shields  constructed as follows:
\begin{align}
\label{eqn:shield}
\Delta_{\psi,\lambda}(s) &:= \{a\in A~|~\sigma_\psi(s,a) \leq \lambda\}\\
\label{eqn:sigma_def}
\text{where } \sigma_\psi(s,a) &:= \min_{\pi\in S\rightarrow A} \sum_{s'\in S} P(s,a,s')\cdot\mathbb{P}_{\ppsysmdp,\pi}^{s'}[\unsafe_{S_U,n}(s')]
\end{align}

\begin{theorem}[Global Safety of Perfect-Perception MDP with Shield]
\label{thm:global}
Given a shielded perfect-perception MDP $\shieldppsysmdp$ where the shield is constructed with respect to a safety property $\psi:=F^{\leq n}S_U$, starting from the initial state $\iota$ of $\shieldppsysmdp$, and under the assumptions that $\iota$ is safe (\textbf{initially safe}) and the system does not encounter stuck states (\textbf{no stuck}), the maximum probability of reaching unsafe states $S_U$ in $n'$ steps is given by,
\begin{align}
\max_{\pi \in S\rightarrow A} \mathbb{P}_{\Gamma_{\mathrm{Perf}}^\Delta,\pi}^{\iota}[\unsafe_{S_U,n'}(\iota)] \leq 1-(1-\lambda)^{n'}
\end{align}
\end{theorem}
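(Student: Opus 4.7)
My plan is to prove the bound by induction on the horizon $n'$, strengthening the claim so that the inductive hypothesis applies uniformly to every non-unsafe state $s$ reachable under $\shieldppsysmdp$; i.e., $\max_{\pi} \mathbb{P}^{s}_{\shieldppsysmdp,\pi}[\unsafe_{S_U,n'}(s)] \leq 1-(1-\lambda)^{n'}$ for every such $s$. The initial state $\iota$ is then a special case.

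The crucial one-step lemma I would establish first is that for every state $s$ and every shield-allowed action $a \in \Delta_{\psi,\lambda}(s)$, the one-step probability of transitioning into $S_U$ is at most $\lambda$: $\sum_{s'\in S_U} P(s,a,s') \leq \lambda$. This comes directly from Eqn.~\ref{eqn:sigma_def}: for any $s'\in S_U$ and any policy $\pi$, the path from $s'$ is already in $S_U$ at step $0$, so $\mathbb{P}^{s'}_{\ppsysmdp,\pi}[\unsafe_{S_U,n}(s')] = 1$. Therefore $\sigma_\psi(s,a) \geq \sum_{s'\in S_U} P(s,a,s')$, and the shield condition $\sigma_\psi(s,a)\leq\lambda$ yields the claim. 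Notice that this step uses only the one-step consequence of $\sigma_\psi\leq\lambda$, which is what makes the proof go through even when $n'\neq n$.

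With this lemma, I would write the standard Bellman-style recurrence, letting $u_{n'}(s) := \max_{\pi} \mathbb{P}^{s}_{\shieldppsysmdp,\pi}[\unsafe_{S_U,n'}(s)]$ and $q_a := \sum_{s'\in S_U} P(s,a,s')$: for safe $s$,
\[
u_{n'}(s) \;=\; \max_{a\in\Delta_{\psi,\lambda}(s)} \Bigl[\, q_a \;+\; \sum_{s'\notin S_U} P(s,a,s')\cdot u_{n'-1}(s') \,\Bigr].
\]
The base case $n'=0$ is immediate since, by \textbf{initially safe}, $u_0(\iota) = 0 = 1-(1-\lambda)^0$. For the inductive step, every successor $s'\notin S_U$ is again reachable under the shielded MDP, so the inductive hypothesis gives $u_{n'-1}(s') \leq 1-(1-\lambda)^{n'-1}$; combining with $q_a\leq\lambda$ from the one-step lemma yields
\[
u_{n'}(s) \;\leq\; 1 - (1-q_a)(1-\lambda)^{n'-1} \;\leq\; 1-(1-\lambda)^{n'}.
\]

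The main obstacle, in my view, is the bookkeeping around the \textbf{no stuck} assumption: it is what guarantees that $\Delta_{\psi,\lambda}(s)$ is non-empty at every reachable state, so the max over shield-allowed actions is well-defined and the induction actually applies to the successors produced by the adversarial policy. A related subtlety is that the shield's synthesis horizon $n$ plays no role beyond the one-step lemma; the proof must avoid trying to ``reuse'' the $n$-step guarantee of any witness policy $\pi$ attaining the $\min$ in $\sigma_\psi$, because the adversarial global policy in $\shieldppsysmdp$ need not follow such witnesses, which is precisely why the bound degrades geometrically as $1-(1-\lambda)^{n'}$ rather than staying at $\lambda$.
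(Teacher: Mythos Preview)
Your proof is correct and reaches the same bound as the paper, but by a genuinely more elementary route. The paper works in an occupancy-vector formalism: it defines a $\psi$-preserving modification $\Gamma_{\mathrm{Perf}}^{\psi}$ (making $S_U$ absorbing), introduces the vector $\Xi_n$ of $n$-step optimal-safety values, proves the shield guarantee in the form $\Xi_n^\top d_{i+1} \geq (1-\lambda)\, C_{S_\Delta^{\lambda,n}}^\top d_i$, and then uses \textbf{no stuck} to coarsen $\Xi_n$ back to the indicator $C_{S_\Delta^{\lambda,n}}$, obtaining the recurrence $g(d_{i+1}) \geq (1-\lambda) g(d_i)$ with $g(d)=C_{S_\Delta^{\lambda,n}}^\top d$. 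Your argument short-circuits this: the observation that $\sigma_\psi(s,a) \geq \sum_{s'\in S_U} P(s,a,s')$ isolates exactly the one-step content that survives the paper's coarsening step, and the Bellman recurrence then does the induction state-by-state rather than on occupancy mass. In effect, the paper carries the full $n$-step lookahead $\Xi_n$ through the derivation only to bound it by $1$ on non-stuck states at the end; you never pick it up in the first place. What the paper's framing buys is a statement that transparently covers time-varying shielded policies $(\pi_i)_i$ (a slight strengthening over the theorem as stated), though your recurrence accommodates that with no extra work. Your remark that the synthesis horizon $n$ enters only through the one-step lemma is exactly right, and matches the paper's own commentary that $n$ does not appear in the final bound.
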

\begin{proof}
The following provides a summary of a complete proof found in Appendix~\ref{sec:proof_thm1}. This proof relies on the assumptions \textit{initially safe} and \textit{no stuck}, which are referenced in Theorem~\ref{thm:global} and precisely defined in Appendix~\ref{sec:proof_thm1}. For ease of construction, we prove a lower bound on safety using an equivalent statement of Theorem \ref{thm:global}: $\min_{\pi \in S\rightarrow A} \mathbb{P}_{\shieldppsysmdp,\pi}^{\iota}[\safe_{S_U,n'}(\iota)] \ge (1-\lambda)^{n'}$.
This proof uses \textit{occupancy vectors} of type $\mathbb{R}^S$ which represent the probability the system is in each state at a given time. First, we define a $\psi$-preserving transformation of a system $\Gamma_\text{Perf}$, which we call $\Gamma_\text{Perf}^\psi$, which allows one to measure the probability of achieving $\psi$ on $\Gamma_\text{Perf}$ by simply considering the probability of $\Gamma_\text{Perf}^\psi$ ending in $S_U$. Defining this transformation allows us to reason about the satisfaction of $\psi$ in $\Gamma_\text{Perf}$ by only considering individual time-steps of $\Gamma_\text{Perf}^\psi$ rather than entire paths on $\Gamma_\text{Perf}$. Then, we build an inductive argument over $d_i$, the occupancy vector achieved by $\Gamma_\text{Perf}^\psi$ at timestep $i$, and a particular function $g : \mathbb{R}^S \rightarrow \mathbb{R}$. By showing that any policy $\pi$ allowed by the shield implies $g(d_{i+1}) \ge (1-\lambda)g(d_i)$ and assuming $g(d_0)=1$, we conclude $g(d_{n'})\ge (1-\lambda)^{n'}$ via an inductive argument. We use the property that $g(d_{n'})$ is a lower bound on the safety of $d_{n'}$ to conclude the probability of $\Gamma^\psi_\text{Perf}$ ending in a safe state must be greater than $(1-\lambda)^{n'}$ after $n'$ time-steps. Using the correspondence of $\Gamma^\psi_\text{Perf}$ and $\Gamma_\text{Perf}$ described above, this implies the restatement of Theorem \ref{thm:global}.
\qed
\end{proof}

It is important to note the distinction between $n$ and $n'$ 
in Theorem \ref{thm:global}. The parameter $n$ represents the lookahead used to construct the shield $\Delta$, while the parameter $n'$ represents the horizon of the system considered for safety verification. It is somewhat counterintuitive that $n$ does not appear in the safety condition provided by Theorem \ref{thm:global}. Figure \ref{fig:worst_case_safety} and its discussion show why increasing lookahead does not improve the safety guarantee of shielded control on an arbitrary system.  However, increasing lookahead may improve safety for a fixed system through mechanisms not captured by Theorem 1.

We note that \emph{stuck} states are those states where the shield is empty and therefore, no further actions are allowed by the agent. Such stuck states are distinct from safe and unsafe states. As we show in Appendix~\ref{sec:entrap},  if the shield is synthesized with respect to a fixed horizon, there exist certain state topologies for which actions allowed by the shield are guaranteed to result in stuck states. In practice, for systems which can abort operation or revert to manual control, it is reasonable to assume that autonomous control will be terminated when a state is reached where no actions are allowed by the shield. Alternately, detecting that a system is in a stuck state can be viewed as an \emph{early-warning} indicating that further use of autonomous control is fraught with risks. A pre-deployment analysis of the frequency of reaching stuck states can also serve as useful feedback to system designers. %
Note that, for a particular system, the likelihood of reaching stuck states can be changed by increasing the shielding lookahead $n$ or modifying the shielding parameter $\lambda$.\footnote{In the case study found in Figure \ref{fig:scarbros_demon}, either of these modifications are sufficient to make stuck states unreachable by a shielded controller.}

Theorem 1 represents a significant advancement of the theoretical foundations of shielded control. Previous methods \cite{Fulton_Platzer_2018,jansen2019safe} shield actions according to their safety relative to the safest action from the same state. This strategy does not produce a global guarantee of safety because even the safest action from a state may be entirely unsafe. Instead, we apply the shielding criterion uniformly across all actions to produce a global guarantee of safety, but this comes at the cost of possible stuck states with no available actions allowed by the shield. In the absence of these stuck states, Theorem 1 gives a global guarantee of safety relative to the parameters of the shield. %

\subsection{Imperfect-Perception MDP with Conformal Shield}
\label{sec:IP_MDP_Shielded}
The shielded, conformalized version of the system  with imperfect perception is the tuple $(S,O,A,\hat{\iota},\hat{\rho}_\alpha,\eta,\omega, \hat{\Delta}_{\psi,\lambda})$ where $\hat{\rho}_\alpha:O\rightarrow 2^S$ is the conformalized perception function and $\hat{\Delta}_{\psi,\lambda}:2^S\rightarrow 2^A$ is a shield that acts on a set of states.
This system can be formulated as a Markov Decision Process (MDP), $\shieldcpsysmdp := (\hat{S},A,\hat{\iota}, P)$, where $\hat{S} := S \times O \times 2^S$ is the set of MDP states, $\hat{\iota} := \iota \times \bot_O \times \{\iota\}$ is the initial MDP state, and $P:\hat{S}\times A \times \hat{S}\rightarrow [0,1]$ is the (partial) probabilistic transition function defined as:
\begin{equation}
\begin{array}{ll}
P := & \forall s,s'' \in S,\forall \overline{s} \in 2^S, \forall o,o' \in O. \\
& (s,m,\overline{s}) \xrightarrow[]{a \in \hat{\Delta}_{\psi,\lambda}(\overline{s}),~\eta(s,a',s'')\cdot\omega(s'',o')} (s'',o',\hat{\rho}(o'))
\end{array}
\end{equation}
The shield $\hat{\Delta}_{\psi,\lambda}$ is constructed from the shield $\Delta_{\psi,\lambda}$ for the perfect-perception MDP $\ppsysmdp$ as:
\begin{equation}
\label{eqn:lifted_shield}
\hat{\Delta}_{\psi,\lambda}(\overline{s}) := \bigwedge_{s\in\overline{s}} \Delta_{\psi,\lambda}(s)
\end{equation}
\paragraph{\textbf{Conformalized Perception.}} To construct the conformalized perception function $\hat{\rho}_\alpha$, we use existing methods for conformalizing classifiers~\cite{angelopoulos2021gentle}. We assume access to a \emph{calibration dataset} consisting of i.i.d. pairs $(o_i,s_i)$ of observations $o_i \in O$ and corresponding actual system states $s_i \in S$, unseen during training of the perception function $\pi:O\rightarrow S$. 
The conformalized perception function $\hat{\rho}_\alpha$, constructed using this data, comes with the following probabilistic guarantee:
\begin{equation}
\label{eq:conf_guar}
    \mathbb{P}[s_\text{test}\in \hat{\rho}_\alpha(o_\text{test})] \geq 1 - \alpha
\end{equation}
where $(o_\text{test},s_\text{test})$ is a fresh observation-state pair from the same distribution as the calibration dataset.

Ideally, we would like to provide a \emph{local safety} guarantee for imperfect-perception MDPs with conformal shields. Given an imperfect-perception MDP $\shieldcpsysmdp$ with conformal shield $\hat{\Delta}_{\psi,\lambda}$ where the shield is constructed with respect to a safety property $\psi:=F^{\leq n}\hat{S}_U$, we would like to guarantee that
\begin{equation}
\label{eqn:local_safe}
    \mathbb{P}[\sigma_{\psi}(\hat{s},a) \leq \lambda] \geq 1 - \alpha
\end{equation}
\noindent where $a\in\hat{\Delta}_{\psi,\lambda}(\hat{s})$, $\sigma_\psi(\hat{s},a) := \min_{\pi\in S \rightarrow A} \sum_{s'\in S} P(s,a,s')\cdot\mathbb{P}_{\ppsysmdp,\pi}^{s'}[\unsafe_{S_U,n}(s')]$, $\hat{s} := (s,m,\overline{s})$, and the probability measure in Equation~\ref{eqn:local_safe} corresponds to the distribution over $(o_i,s_i)$ pairs induced by the distribution $\mathbb{P}_{\shieldcpsysmdp,\pi}^{\hat{\iota}}$ over paths of the MDP $\shieldcpsysmdp$ where $\pi$ is the worst-case policy for safety.

Intuitively, this would guarantee that after applying an action prescribed by the shield in the current state $\hat{s}$, there exists a policy from the resulting successor states such that, assuming perfect perception in the future, the statement that the probability of reaching unsafe in the future is bounded holds with a high probability. While such a guarantee assumes a best-case scenario in the future, it can be helpful to know that the chosen action is at least safe in an optimistic setting.

Establishing such a guarantee would be straightforward if the calibration dataset were drawn from the same distribution over $(o_i,s_i)$ pairs as in Equation~\ref{eqn:local_safe}. In practice, however, the calibration dataset is initially drawn from the distribution over $(o_i,s_i)$ pairs induced by the path distribution for the imperfect-perception MDP $\sysmdp$ that is neither shielded nor conformalized and where the actions in each step are drawn uniformly at random. One can then use the following iterative procedure to improve the calibration dataset: (i) use the initial calibration dataset to construct the conformalized perception function; (ii) construct a shielded, conformalized MDP and use this MDP (via simulations) to collect more calibration data; (iii) use the new calibration dataset to construct a new conformalized perception function, and repeat. We leave the investigation of such a procedure and its ability to help establish the desired local safety guarantee as future work. Ideally, we would also like to provide a global safety guarantee, similar to Theorem~\ref{thm:global}, that bounds the probability of reaching unsafe states as long as the agent takes actions prescribed by the shield. We leave this also for future work.

\subsection{Probabilistic Abstractions of Conformalized Perception}
\label{sec:Abs_MDP_Shielded}
The observation function $\omega:S\times O \rightarrow [0,1]$ is very complex in practice and infeasible to model mathematically. To be able to model check $\shieldcpsysmdp$, we leverage past work~\cite{calinescu2023IMRPSV,PasareanuMGYICY23} to construct a probabilistic abstraction of the composition of the conformalized perception function $\hat{\rho}$ and $\omega$. The resulting  shielded, conformalized system  with imperfect perception is the tuple $(S,A,\hat{\iota},\nu,\eta,\hat{\Delta}_{\psi,\lambda})$ where $\nu:S\times 2^S \rightarrow [0,1]$ is the probabilistic abstraction that describes a distribution over sets of estimated system states given an actual system state. 
This system can be formulated as a Markov Decision Process (MDP), $\shieldcpprsysmdp := (\hat{S},A,\hat{\iota}, P)$, where $\hat{S} := S \times 2^S$ is the set of MDP states, $\hat{\iota} := \iota \times \{\iota\}$ is the initial MDP state, and $P:\hat{S}\times A \times \hat{S}\rightarrow [0,1]$ is the (partial) probabilistic transition function defined as:
\begin{equation}
\begin{array}{ll}
P := & \forall s,s' \in S,\forall \overline{s} \in 2^S. \\
& (s,\overline{s}) \xrightarrow[]{a \in \hat{\Delta}_{\psi,\lambda}(\overline{s}),~\eta(s,a,s')\cdot\nu(s',\overline{s}')} (s',\overline{s}')
\end{array}
\end{equation}

\section{Empirical Evaluation}
\label{sec:eval}

We evaluate our proposed approach on a simulated case study that mimics the autonomous airplane taxiing system from prior works~\cite{KadronGPY21,fremont2020formal,PasareanuMGYICY23}. Note that all of our empirical evaluation is with respect to imperfect-perception systems for which Theorem~\ref{thm:global} is not applicable.

\subsection{Implementation}

Figure~\ref{fig:pipeline} describes our overall pipeline.\footnote{Code available at \url{https://github.com/CSU-TrustLab/cp-control}.} 
It is implemented in Python along with external tools such as the PRISM model checker~\cite{PRISM}. The \emph{conformalizer} implements a standard conformal prediction algorithm (Figure 2 from \cite{angelopoulos2021gentle}) and yields the conformalized perception DNN from the original perception DNN of the agent. The \emph{shield synthesizer} is implemented using PRISM. In particular, for each state $s \in S$ and each action $a \in A$ of the perfect-perception MDP, we issue a PRISM query as described in Eqn.~\ref{eqn:sigma_def} and construct the shield using Eqn.~\ref{eqn:shield}. The \emph{DNN evaluator} simply evaluates the conformalized DNN on the test data to construct a confusion matrix that records, for each actual state, the number of times each set of estimated states is predicted. 
We also implement a simple \emph{compiler} in Python that converts perfect-perception MDPs expressed in a domain specific language to imperfect-perception MDPs with a conformal shield and a probabilistic abstraction, i.e., $\shieldcpprsysmdp$ (as in Section~\ref{sec:Abs_MDP_Shielded}) expressed in the PRISM modeling language. The compiler ingests the shield computed by the shield synthesizer and the confusion matrix for this translation. 
Finally, we use PRISM as our \emph{model checker} to model check the resulting MDP.

\subsection{Details of Autonomous Taxiing Case Study}

For our case study, we consider an experimental system for autonomous centerline tracking on airport taxiways~\cite{KadronGPY21,fremont2020formal}. It takes
as input a picture of the taxiway and estimates the plane’s position with respect to the
centerline in terms of two outputs: cross-track error (cte), which is the distance in
meters of the plane from the centerline and heading error (he), which is the angle
of the plane with respect to the centerline. The state estimation or perception is performed by a neural network. The resulting state estimates are
fed to a controller, which maneuvers the plane to follow the centerline. Error is defined
as excessive deviation from the centerline, either in terms of cte or he values. We consider a discretized version of the system as in prior work~\cite{PasareanuMGYICY23}.

\noindent{\emph{\textbf{Simulation Setup.}}}
To explore with realistic data, we devised a simulated setup which has the same objective as the autonomous airplane taxiing problem. Specifically, we implemented a simulated setup of a Turtlebot4 robot in the Gazebo simulator operating on a taxiway. In our simulation, the Turtlebot4 serves the role of the airplane and it is equipped with an RGB camera which can capture $240 \times 320$ pixel images. We use the simulator to collect a dataset of images that would be captured by the camera. This collected dataset is used for training the perception DNN and also for conformalizing the DNN.

\begin{figure}[t]
    \centering
    \includegraphics[width=0.5\linewidth]{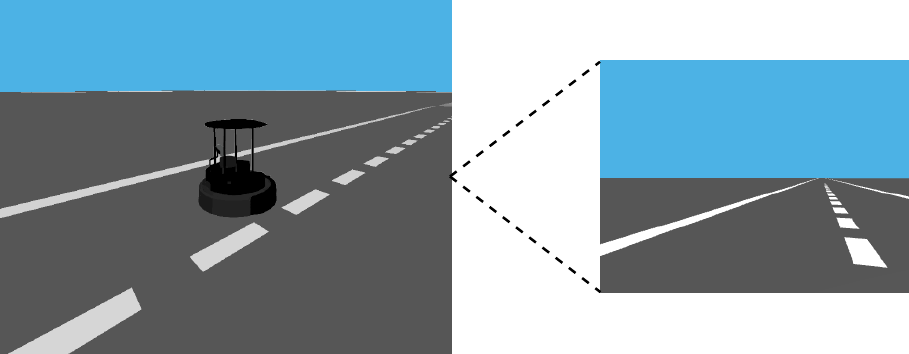}
    \caption{The simulation setup of the Turtlebot4 in Gazebo (left), and a sample image captured by the Turtlebot4's camera (right). 
    These captured images represent the images that would be passed to the perception DNN.}
    \label{fig:simulator_setup}
\end{figure}

\noindent{\emph{\textbf{Data Processing Pipeline.}}}
To generate a dataset of images, we place the Turtlebot4 randomly on the taxiway, and save the resulting image taken from the camera. Along with this image, we capture the true pose of the robot, $(x, y, \theta)$, where $(x,y)$ are the robot's coordinates and $\theta$ is the angle with respect to the centerline. We collected a total of 5500 images. Fig.~\ref{fig:simulator_setup} contains screenshots of the simulated setup and an example image taken from the camera. Input preprocessing normalizes RGB values using ImageNet statistics ($\mu=[0.485, 0.456, \\ 0.406]$, $\sigma=[0.229, 0.224, 0.225]$) to leverage pre-trained backbone weights for perception DNNs. For training the DNN, we further augment the dataset using strategies such as random horizontal flipping ($p=0.5$), color jittering with brightness and contrast variations (±0.2), and Gaussian noise injection ($\sigma=0.01$) to improve model robustness and generalization.
We use 60\% of the collected data for training the DNN, 10\% for validation, 15\% for calibration, and 15\% for testing. 
The dataset is split using stratified sampling to maintain balanced class distributions across all cte and he categories. This ensures representative learning across all states and prevents bias toward more frequent state configurations.

\noindent{\emph{\textbf{State Space Representation.}}} As in past work
~\cite{PasareanuMGYICY23}, we consider a discretized state space where the safe range of cross-track error (cte) is discretized into 5 buckets
and safe range of heading error (he) into 3 buckets.
This yields a total of 15 discrete system states that perception must classify from visual input.

\noindent{\emph{\textbf{Neural Architecture.}}} The perception function $\rho$ is implemented as a multi-task convolutional neural network processing RGB images of size $240 \times 320 \times 3$. The architecture employs a MobileNetV2 backbone utilizing depthwise separable convolutions with inverted residual blocks for computational efficiency. Feature extraction layers progressively reduce spatial dimensions from $240 \times 320$ to $7 \times 10$ while expanding the channel depth to 1280.
The global average pooling aggregates spatial features into a unified 1280-dimensional representation. The shared processing backbone applies dropout regularization ($p=0.3$), batch normalization, and ReLU activation, with linear transformations that reduce dimensionality through $1280 \to 512 \to 256$ neurons.

The architecture then branches into task-specific classification heads from the 256-dimensional shared representation. The cte head implements a $(256 \to 128 \to 5)$ structure with ReLU activations, computing probability distributions over the five lateral position states. The he head follows an identical architecture $(256 \to 128 \to 3)$ for the three heading states. Both heads employ softmax activation to produce normalized probability distributions over their respective state spaces.

The trained model outputs joint probability distributions $P(\text{cte}|o)$ and $P(\text{he}|o)$ for each input observation $o$, providing both point predictions through argmax selection and uncertainty estimates through probability values. These probabilistic outputs $\hat{f}$ serve as input to the conformalization process.

\noindent{\emph{\textbf{Neural Training.}}} The network jointly optimizes both classification tasks 
with independent cross-entropy losses: $\mathcal{L}_{\text{cte}} = -\sum_{i} y_i^{\text{cte}} \log(p_i^{\text{cte}})$ and $\mathcal{L}_{\text{he}} = -\sum_{i} y_i^{\text{he}} \log(p_i^{\text{he}})$, where $y_i$ represents one-hot encoded ground truth labels and $p_i$ denotes predicted probabilities. The combined objective $\mathcal{L}_{\text{total}} = \mathcal{L}_{\text{cte}} + \mathcal{L}_{\text{he}}$ enables the simultaneous learning of both navigation parameters.

For training, we use Adam optimization with $\beta_1=0.9$, $\beta_2=0.999$ which provides adaptive learning rate control with initial rate $0.001$. 
The learning rate scheduler monitors the validation loss with patience=5 epochs, applying multiplicative decay (factor=0.1) when improvement stagnates. Regularization of weight decay ($\text{factor}=1 \times 10^{-4}$) 
prevents overfitting, while gradient clipping ($\text{max\_norm} = 1.0$) ensures training stability. Mini-batch processing with $\text{batch\_size}=16$ balances memory efficiency with gradient estimation quality.

\noindent{\emph{\textbf{MDP Formulation.}}}
We formulate a perfect-perception MDP corresponding to our case study where the state space $S:=\{0,\ldots,4\}\times\{0,1,2\}$ representing the five possible cte values and three possible he values and the action space $A:=\{0,1,2\}$ corresponds to the three actions available for each state, namely, go straight, go left, and go right. In each time-step, the agent continues to move forward even if it executes the command to go left or go right. We define a stochastic dynamics for the agent where the source of stochasticity is in the action execution. For instance, the agent controller might choose to go straight in a particular state, but we model a small probability of this choice actually resulting in the agent going left or right due to windy conditions or actuator errors. For all our experiments, we assume that the agent starts at the centerline, facing straight ahead (i.e., in the initial state, cte=0 and he=0). A PRISM model of the perfect-perception MDP is available in Appendix~\ref{sec:prism_model}.
Note that while the PRISM model makes some changes for ease of expression, in essence it models a perfect-perception MDP as explained here.

The autonomous taxiing agent fails the mission if it moves off the taxiway, or if the agent rotates too far left or right. These states will be referred to as \emph{fail} state and are terminal states in the model. Moreover, the imperfect-perception agent with a conformal shield may find itself in a state where the set of predicted states (output by the conformalized perception) are such that the shield (as constructed using Eqn.~\ref{eqn:lifted_shield}) has no safe actions available for the predicted set. Such states are referred to as \emph{stuck} states and are also terminal states in our MDP.\footnote{An empty shield is a commonality between this definition of \textit{stuck} and the one given in Section \ref{sec:PP_MDP_Shielded} in the perfect-perception context. However, the conditions which produce stuck states in the two cases are different because of the distinct shield constructions.
} 
In practice, once the agent encounters a stuck state, it may employ a failsafe action (take another reading, go straight, halt the agent, raise an alarm to a human operator, etc.) but this is outside the scope of our current work.

For our experiments, we construct the imperfect-perception MDP with conformal shielding and probabilistic abstractions (as defined in Section~\ref{sec:Abs_MDP_Shielded}) using our compiler. The perfect-perception shield is constructed with respect to the PCTL property, $F^{\leq 5}S_U$ where $S_U$ are the fail states for the case study. We construct multiple MDPs by varying the conformalization hyperparameter $\alpha$ ($(1-\alpha) = \alpha' \in \{0.95, 0.99, 0.995\}$), and the shield hyperparameter $\lambda$ ($(1 - \lambda)= \lambda' \in \{0.7, 0.8, 0.9\}$). With each of the imperfect-perception MDP, we employ three different versions:
\begin{enumerate}
    \item An MDP that has non-determinism intact for selecting actions (referred to as \emph{worst-case})
    \item An MDP where we fix the policy to choose actions allowed by the shield uniformly at random (referred to as \emph{random})
    \item An MDP where where we fix the policy to choose the safest action as per the shield; given a set of predicted states $\overline{s}$, the safest action $a_{safe}$ is the action with the smallest maximum probability of unsafety across the set of states, i.e., $a_{safe} := \argmin_{a\in \hat{\Delta}_{\psi,\lambda}(\overline{s})} \max_{s\in\overline{s}} \sigma_\psi(s,a)$ where $\sigma_\psi(s,a)$ is as defined in Eqn.~\ref{eqn:sigma_def} (referred to as \emph{safest})
\end{enumerate}

We use the PRISM model checker to analyse the models constructed with respect to the following PCTL properties:
\begin{equation}
    \begin{split}
        \mathbf{P_{\textsf{fail}}}&:\: \mathbb{P}_?\: [F^{\leq n'}\: \text{``fail''}] \\
        \mathbf{P_{\textsf{stuck}}}&:\: \mathbb{P}_?\: [F^{\leq n'}\: \text{``stuck''}] \\
        \mathbf{P_{\textsf{success}}}&:\: \mathbb{P}_?\: [F\: t=n']
    \end{split}
\end{equation}
where $n'$ is the length of the agent paths (i.e., number of time-steps) and $t$ is a variable denoting the number of steps taken by the agent. $\mathbf{P_{\textsf{success}}}$ captures the situation where the agent successfully operates for $n'$ steps without failing or getting stuck. In our experiments, we consider values of $n'$ in $\{1,\ldots,30\}$. For the MDP models using non-determinism for action selection (\emph{worst-case}), the properties were slightly modified; $\mathbf{P_{\textsf{fail}}}$ and $\mathbf{P_{\textsf{stuck}}}$ are aimed at finding the strategy which maximizes these values, and $\mathbf{P_{\textsf{success}}}$ is aimed at finding the minimal value. This is to find the worst case scenario, i.e. where the system chooses the worst possible action allowed by the shield at each step.

As a baseline, we consider an imperfect-perception MDP formulation of the autonomous taxiing agent where the perception DNN is not conformalized and the shield is constructed assuming perfect perception.

\subsection{Empirical Results}

As the length of the operation increases, the probabilities of encountering either the fail state or the stuck state increases, therefore a decrease of mission success (see Fig.~\ref{fig:safety_results}). 
The conformaliser hyperparameter $\alpha'$ (i.e., $1-\alpha$)  has a substantial effect on the performance of the agent. As expected, higher $\alpha'$ decreases the probability of going to the fail state.
Increasing $\alpha'$, however, results in the conformalized perception predicting larger sets of states. This in turn increases the probability of the shield being empty, i.e., no safe action being available and the agent getting stuck as reflected in the results. Note that, in our baseline setup, for all states, there was a minimum of one action allowed by the shield, meaning the system would never get stuck.

The shield hyperparameter $\lambda'$ (i.e., $1-\lambda$) also affects the agent significantly. As $\lambda'$ increases, fewer actions are included in the shield. This therefore increases the number of sets of states where the shield is empty, resulting in an increase of probability of encountering the stuck state as seen in the results.
A consequence of getting stuck often is that the probability of encountering the fail state as well as succeeding in the mission are reduced.
That said, when the conformalizer guarantee is set to 0.95, there is only a small increase to the probability of encountering the stuck state. It is then seen that the probability of going to a fail state decreases as the shield value increases, demonstrating the shield is further filtering unsafe actions. 

We also compare our results with a baseline imperfect-perception MDP model of the agent where the perception is not conformalized and the shield is constructed assuming perfect perception.
The baseline results demonstrate that the inclusion of conformalization generally reduces the probability of crashing ($\mathbf{P_{\textsf{fail}}}$). While the baseline has a higher probability of succeeding ($\mathbf{P_{\textsf{success}}}$) in some instances, this is due to the impossibility of getting stuck. For instance, when $\alpha'=0.95$, the conformalized MDP demonstrates a smaller probability of getting stuck and outperforms  baseline with respect to $\mathbf{P_{\textsf{success}}}$. In contrast, for higher values of $\alpha'$, probability of success is lower compared to the baseline because the agent is very likely to get stuck; for the same reason, the conformalized agent  demonstrates a lower probability of failure compared to the baseline.

\begin{figure}[t]
    \centering
    \includegraphics[width=1.0\linewidth]{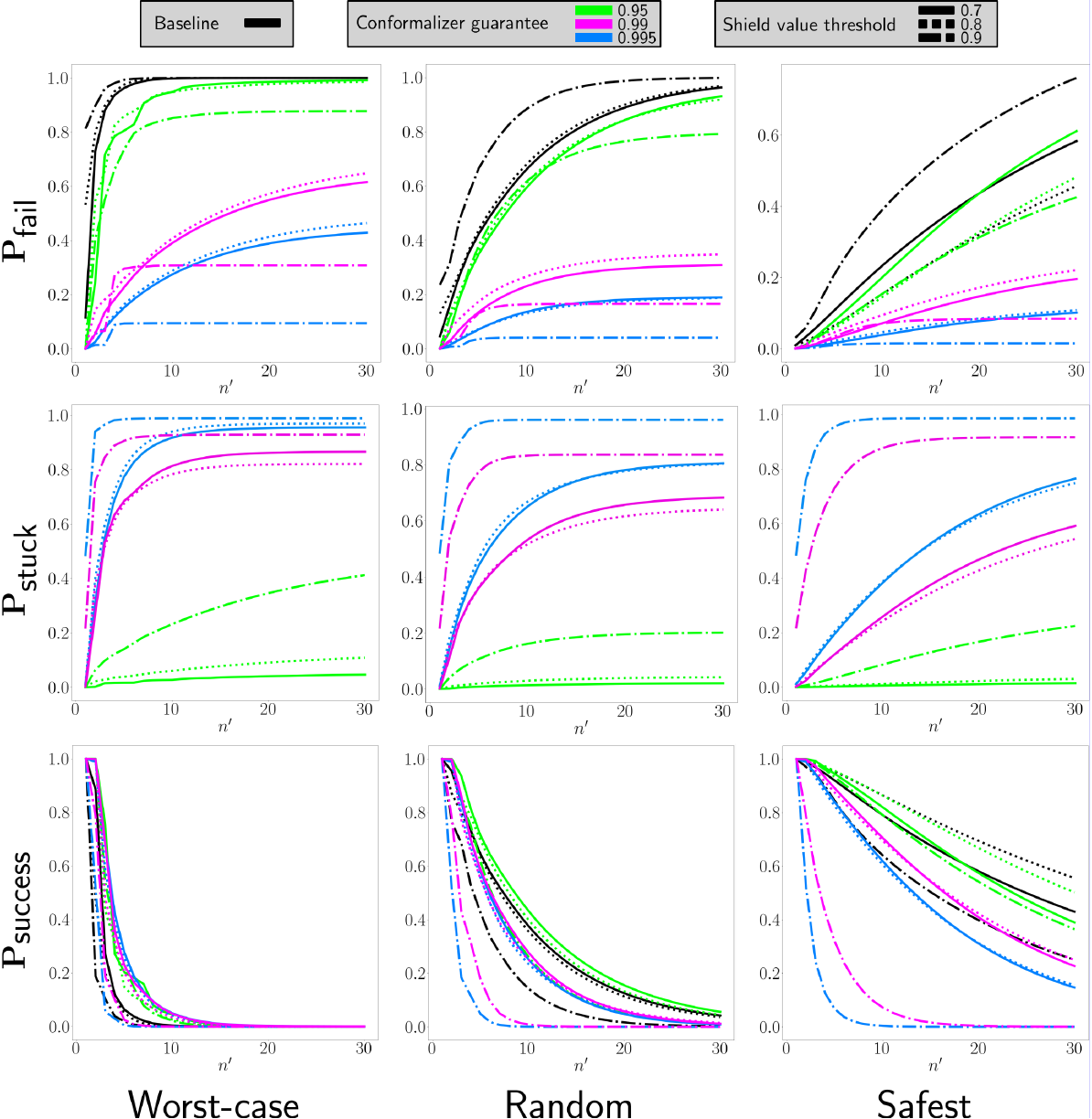}
    \caption{Results for properties $\mathbf{P_{\textsf{fail}}}$, $\mathbf{P_{\textsf{stuck}}}$, and $\mathbf{P_{\textsf{success}}}$ as $n'$ increases. Black lines represent the baseline results.}
    \label{fig:safety_results}
    \vspace{-0.2in}
\end{figure}

We also evaluate the time to perform model checking for each property across $n'$, see Fig.~\ref{fig:time_checking_results}. Expectedly, as $n'$ increases the model checking time also increases, with $n'=1$ on average being less than 0.01 seconds, and $n'=30$ ranging between 0.2 and 1.5 seconds depending upon property and setup. As $\alpha'$ increases, more states are included in the sets, increasing the overall model size. Though, continuing to increase $\alpha'$ will inevitably result in fewer unique sets being produced as a result of the sets including more and more of the states. Therefore, this will result in a smaller model for analysis purposes. This is observed as $\alpha'$ increases from 0.95 to 0.99 with the accompanying increase in time, and then a decrease in time when $\alpha'$ increases from 0.99 to 0.995. Increasing $\lambda'$ can reduce the number of available actions for each set of states and can lead to a reduction in model checking time as the number of possible transitions decreases, specially if the shield value threshold generates substantially more empty action sets for the predicted sets of states. 

\begin{figure}
    \centering
    \includegraphics[width=1.0\linewidth]{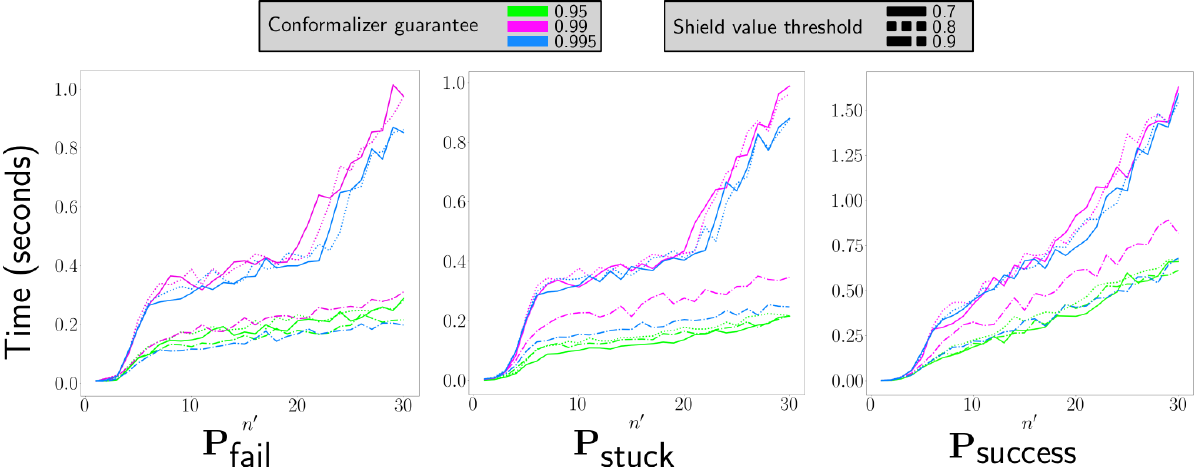}
    \caption{Time to perform model checking on each property as $n'$ increases. For each variation of $\alpha'$ and $\lambda'$, the average time was computed over the three different model variations---worst-case, random, and safest.}
    \label{fig:time_checking_results}
\end{figure}

\subsection{Discussion}
We note that our proposed shield can be used either in an offline or online fashion. Similar to existing works on shielding~\cite{jansen2019safe,carrNJT2023safe}, the shielded imperfect-perception MDP can be used when learning an optimal policy via reinforcement learning (RL) to ensure safe exploration while accelerating learning. Once an optimal policy is learned, the shield may be discarded. In contrast to this offline usage, the shield may also be used online as a run-time monitor that ensures no actions are taken by the agent that violate local safety. This mode of usage is particularly useful when the optimal policy does not come with any safety guarantees. The synthesis of the shield itself can also be carried out in an offline~\cite{jansen2019safe} or online~\cite{konighofer2023online} fashion as has been noted in prior works. Since our conformal shield maps sets of predicted states to sets of actions, the size of the shield grows exponentially with the number of possible classes that the original perception predicts over. Consequently, online shield construction can be particularly effective since the shield is only computed for the sets of predicted states observed in operation. We also note that, as in existing works on shielding, our construction also assumes that the safety-relevant fragment of the MDP can be mathematically modeled. However, our work does not require modeling the observation or perception processes. Finally, we already note in Section~\ref{sec:IP_MDP_Shielded} that the conformalization guarantee requires distributional assumptions and careful collection of calibration data.

\section{Related Work}
\label{sec:related}
\paragraph{\textbf{Shielding.}} %

The work on shielding Partially Observable Markov Decision Processes (POMDPs)~\cite{carrNJT2023safe,sheng2024safe} is closely related to our approach.
Like MDPs, POMDPs model decision-making settings that involve both stochasticity and non-determinism. However, in contrast to MDPs, POMDPs also capture the fact that the underlying state can only be accessed through stochastic observations. Safety is ensured by constructing a shield that restricts the set of available actions in each belief state of the POMDP.\footnote{A \emph{belief state} is a distribution over the underlying states that reflects the agent's current best estimate based on the history of observations.}

While promising, POMDP-based shielding faces several limitations. First, it requires modeling the stochastic process of generating observations from underlying states. For the airplane taxiing example, one would need to model the complex process of generating images corresponding to the airplane’s position on the taxiway, which can be infeasible in practice.\footnote{\cite{carrNJT2023safe} slightly relax this requirement by only needing to know the transitions from states to observations that have non-zero probability, rather than the exact probabilities.} Second, a POMDP formulation assumes that agents compute belief states before choosing actions. However, this assumption often does not hold for realistic systems. Third, existing works on POMDP shielding~\cite{carrNJT2023safe,sheng2024safe} only consider infinite-horizon, almost-sure safety properties (i.e., the probability of violation tends to zero as the run length increases). However, for many agents, including the autonomous taxiing system, the probability of violating safety over an infinite horizon is often one. Therefore, we are interested in finite-horizon, probabilistic safety guarantees.

On the other hand, \cite{Goodall2024B} address the challenge of achieving safe reinforcement learning (RL) in complex, continuous environments. They implement a counterexample-guided policy optimization based on reward computation. However, this study only provides approximate safety guarantees. \cite{Konighofer2020LJB} explore shielding for RL to synthesize a correct-by-construction reactive system and further extend this with a formal “probabilistic shield” approach. This method computes per-state/action safety constraints via shielding, filtering the RL agent's actions focusing on the tradeoffs
between sufficient progress in exploration of the environment
and ensuring safety%
~\cite{jansen2019safe}. Nevertheless, they do not formally guarantee a global safety of existing shield constructions for perfect-
perception.
\paragraph{\textbf{Conformal Prediction and Control.}}
Conformal prediction offers a powerful framework for generating finite-sample, distribution-free error bounds for perception, which can then be integrated into robust control pipelines. End-to-end probabilistic safety guarantees for perception-based control with sequential, time-uniform conformal bounds have been explored by Yang et al. \cite{yang2023safe} using barrier functions. Building on this, Lindemann et al. \cite{lindemann2024formal} effectively represent uncertainty via conformal prediction and formalize safety through linear programming. These approaches focus on continuous time and therefore provide weaker formal guarantees.
State-dependent (heteroskedastic) conformal bounds have been introduced by Waite et al. \cite{WaiteGTRI2025} to achieve tighter and less conservative safety guarantees. This work also differs from our approach, as it provides guarantees of reaching a particular state while in continuous space. %
While Fulton and Platzer \cite{Fulton_Platzer_2018} presented formal model-based safety proofs with reinforcement learning via runtime monitoring, this work doesn't explicitly address learned vision, neural perception, or the direct propagation of perception uncertainty into control safety. 
\cite{dean2020robust} utilizes a learned perception map to predict a linear function of the system's state to design a safe set and a robust controller for the closed-loop system. They demonstrate that, given suitable smoothness assumptions on both the perception map and the generative model, the parameters of the safe set can be learned through dense sampling of the state space.

\paragraph{\textbf{Safety Certificates.}}
Safety certificates, particularly Control Barrier Functions (CBFs), are widely used for run-time enforcement of safety~\cite{dean2021guaranteeing}. Differentiable and neural CBFs enable their integration into end-to-end pipelines to filter out unsafe actions and intervene to preserve safety, as in robotics~\cite{tong2023enforcing}. Works by Xiao et al.~\cite{Xiao2022WCAHR,Xiao2023TRMALR} evaluate the efficacy of CBFs in end-to-end vision-based autonomous driving and control. However, their formal guarantees primarily rely on the CBF layer itself and often do not explicitly model or formally bound vision-induced uncertainty. While these methods apply to autonomous driving with RGB camera images and may require explicit error bounds for estimators, they typically do not derive these bounds directly from learned vision or conformal prediction methods, unlike approaches such as Lindemann et al.~\cite{Lindemann2021RJTM}. More recently, Dawson et al.~\cite{dawson2022learning} proposed a learning-enabled hybrid controller with neural safety certificates, providing formal safety and liveness guarantees for systems using learned perception directly in observation space, thus not relying on explicit state estimation. An automated, formal, counterexample-based approach to synthesizing Barrier Certificates (BCs) as a neural network for the safety verification of continuous and hybrid dynamical models is introduced in~\cite{peruffo2020automated}.
\paragraph{\textbf{Analysis of Autonomous Systems with Learned Perception.}}
Formal analysis of autonomous systems that incorporate learned perception components is a growing area \cite{mitra2024formal}. Approaches such as DeepDECS~\cite{calinescu2023IMRPSV} integrate DNN verification and uncertainty quantification into formal (Markov model-based) controller synthesis to create formal, system-level safety-aware controllers. Similarly, probabilistic abstractions (via confusion matrices) and DNN-specific runtime guards for systems like airplane taxiing with vision DNNs have been explored \cite{PasareanuMGYICY23}. While Hsieh et al.~\cite{Hsieh2021LSJMM} approximated high-dimensional vision perception with low-dimensional surrogates for system-level verification, its safety is primarily verified empirically and through program analysis, lacking formal guarantees that connect actual perception errors directly to safety; it relies on empirical "precision" without formal coverage bounds. Another line of work involves scenario-based probabilistic compositional verification with symbolic abstractions, as seen in applications to airplanes and F1Tenth cars with DNN vision. However, these methods, like Watson et al.~\cite{Watson2025AGMP}, often focus on assumption generation for compositional verification rather than formally modeling vision uncertainty or providing finite-sample coverage guarantees or explicit set-based safety through conformal approaches.

\section{Conclusion}
\label{sec:conclusion}

In this paper, we study the problem of designing safe controllers (or policies) for discrete autonomous agents with imperfect perception. We present MDP formulations of such agents and propose a shield construction for such agents that restrict, at run-time, the actions available to the agent in each state as a function of state estimates. The construction combines the shield computed for a perfect-perception agent with conformalization of the perception component, with the aim of providing probabilistic, local safety guarantees with respect to finite horizons. Importantly, our construction does not require mathematical models of the complex process for generating observations from actual states and of the perception component. We also present a result proving \emph{global safety} of existing shield constructions for perfect-perception agents. We evaluate our constructions using the case study of an autonomous airplane taxiing system. Our experiments demonstrate that while the shield can improve safety, it comes at the cost of the system getting \emph{stuck}, i.e., reaching states where the shield is empty. In general, reaching stuck states is unavoidable under shielding, even with perfect perception. In future work, we plan to develop a proof of global safety for imperfect-perception agents using our conformal shield.

\bibliographystyle{splncs04}
\bibliography{bibfile}

\newpage
\appendix
\section{Proof of Theorem \ref{thm:global}}
\label{sec:proof_thm1}

\subsection{Definitions}
The following proof uses many common constructions from linear algebra to describe the progression of a controlled system. To describe these constructions we use the following notation. Let $(a_0,...,a_{n-1})$ denote a vector of length $n$ or equivalently, a matrix of dimension $n \times 1$. Let $\bracket{v,u}$ denote the inner product between vectors $v$ and $u$ of equal length. Let $A^\top$ denote the transpose of the matrix $A$. Adjoining two matrices $A$ and $B$, of dimension $N\times M$ and $M\times L$ respectively, represents their multiplication. Finally, we use $e_i$ to denote the $ith$ canonical basis vector, and $\mathbf{0}$ to denote the zero vector.

\subsection{Construction of a Faithful Model}

To reason about the probabilistic properties of the shielded control system we translate the previous formal constructions into a occupancy model of state transitions. This approach aligns with standard models of controlled Markovian systems (see, e.g., \cite{puterman1994mdp}, \cite{bertsekas2017dpoc}).

\begin{definition}
    Occupancy Vectors: For a control system $\Gamma_\mathrm{Perf}$, the state distribution at a given time is represented by an \textit{occupancy vector} $d \in \mathbb{R}^S$. Each entry $d_s$ gives the probability of being in state $s$ at that time-step. 
\end{definition}
Constructing an \textit{occupancy vector} relies on a fixed enumeration of the state space via a surjection $o: \mathbb{N} \to S$. Under this enumeration, each state $s \in S$ corresponds to a canonical basis vector $e_{o^{-1}(s)} \in \mathbb{R}^S$. Using this construction, the probability $d_s$ equates to the inner product $\bracket{e_{o^{-1}(s)},d}$.

Under perfect perception, the system dynamics are governed by the transition function $P: S \times A \times S \to [0,1]$, where $P(s,a,s')$ gives the probability of transitioning from state $s$ to $s'$ under action $a$. For a given policy $\pi: S \to A$, we define the corresponding \emph{transition matrix} $T_\pi \in \mathbb{R}^{S \times S}$ as:
\begin{align}
\label{eq:T-def}
    (T_\pi)_{i,j} = P(o(i), \pi(o(i)), o(j)).
\end{align}
This construction ensures that $T_\pi$ faithfully models the evolution of $\Gamma_\mathrm{Perf}$ using the policy $\pi$. Specifically, for any basis vector $e_i$, the probability of transitioning to $e_j$ in one time-step is:
\begin{align}
\label{eq:T-prop1}
    e_j^\top T_\pi e_i = P(o(i), \pi(o(i)), o(j)).
\end{align}

\begin{lemma}
\label{lem:perf:faithful_model}
The evolution of $\Gamma_\mathrm{Perf}$  can be described by a series of occupancy vectors and the transition matrix $T_\pi$. If $d$ represents the state distribution at time-step $i$, the distribution at the next time-step is:
\begin{align}
\label{eq:T-ev}
    d' = T_\pi d.
\end{align}
\end{lemma}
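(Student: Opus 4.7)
The plan is to verify component-wise that $d' = T_\pi d$ correctly captures the one-step evolution of the state distribution of $\Gamma_\mathrm{Perf}$ under the fixed deterministic policy $\pi$. The core tool is the law of total probability, combined with the Markov property built into the transition function $P$ and the fact that $\pi$ prescribes a unique action in each state.

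First I would fix an arbitrary index $j$ and unfold $(T_\pi d)_j$ using the expansion $d = \sum_i (e_i^\top d)\, e_i$, together with property~(\ref{eq:T-prop1}):
\begin{equation*}
(T_\pi d)_j \;=\; e_j^\top T_\pi d \;=\; \sum_i \bigl(e_j^\top T_\pi e_i\bigr)\bigl(e_i^\top d\bigr) \;=\; \sum_i P\bigl(o(i),\, \pi(o(i)),\, o(j)\bigr)\cdot d_{o(i)}.
\end{equation*}
This step is purely linear-algebraic and uses only the definition of $T_\pi$ and the orthonormality of the canonical basis.

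Next I would interpret each factor probabilistically. By the hypothesis that $d$ is the occupancy vector at some time-step $t$, $d_{o(i)}$ is the probability that the system is in state $o(i)$ at time $t$. Because $\pi$ is deterministic and $P$ is Markovian, the conditional probability of being in state $o(j)$ at time $t+1$ given state $o(i)$ at time $t$ is exactly $P(o(i), \pi(o(i)), o(j))$. Applying the law of total probability over the partition of the sample space induced by the current state yields
\begin{equation*}
\Pr[\text{state at time } t+1 \text{ is } o(j)] \;=\; \sum_i d_{o(i)} \cdot P\bigl(o(i),\, \pi(o(i)),\, o(j)\bigr),
\end{equation*}
which coincides with the expression above. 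Hence $T_\pi d$ agrees with the true occupancy vector at time $t+1$ on every canonical basis direction, and thus as vectors.

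The main obstacle I anticipate is mostly bookkeeping rather than mathematical depth: the excerpt declares $o$ to be a surjection onto $S$, so in principle two distinct indices could map to the same state and the symbol $o^{-1}(s)$ would be ambiguous. The cleanest fix is to observe that since $S$ is finite we may without loss of generality take $o$ to be a bijection, after which $(T_\pi)_{i,j}$ is well-defined and the calculation above goes through verbatim. No deeper difficulty arises, because perfect perception collapses each MDP state to the underlying system state, so the Markov property and the deterministic action choice transfer directly from $(P,\pi)$ to the matrix $T_\pi$.
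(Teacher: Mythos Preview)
Your proposal is correct and follows essentially the same component-wise argument as the paper: both verify the equality by combining property~(\ref{eq:T-prop1}) with the law of total probability for the one-step state distribution. The only cosmetic difference is direction---you expand $(T_\pi d)_j$ and identify it with the probabilistic expression, whereas the paper starts from $d'_{s'} = \sum_{s} P(s,\pi(s),s')\,d_s$ and rewrites it via~(\ref{eq:T-prop1}) as $e_{o^{-1}(s')}^\top T_\pi d$.
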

\begin{proof} Lemma \ref{lem:perf:faithful_model} follows from the construction of $T_\pi$. First, we may directly compute the probability of observing each state $s'$ at the next time-step, $d'_{s'}$, from the probability of being in each state $s$ at the current time, $d_{s}$. This formula is given by \eqref{eq:T_der:direct_prob}.

\begin{align}
\label{eq:T_der:direct_prob}
    d'_{s'} &= \sum_{s\in S} P(s,\pi(s),s') d_s 
\end{align}
From this basic statement of conditional probabilities, substituting eq. \eqref{eq:T-prop1} leads to eq. \eqref{eq:T_der:inter}.
\begin{align}
    d'_{s'} &= \sum_{s\in S} e_{o^{-1}(s')}^\top T_\pi e_{o^{-1}(s)} d_s \\
    \label{eq:T_der:inter}
    d'_{s'} &= e_{o^{-1}(s')}^\top T_\pi d
\end{align}
From \eqref{eq:T_der:inter} we may conclude \eqref{eq:T-ev} using the equality $d'_{s'}=\bracket{e_{o^{-1}(s')},d'}$. \hfill $\qed$
\end{proof}

\subsection{Modeling Constraints}

We will show how to model the satisfaction of the property "eventually unsafe" ($\psi=F^{\le n'} S_U$) in the occupancy model on the system $\Gamma_\mathrm{Perf}$ by modifying $\Gamma_\mathrm{Perf}$ and abstracting the modified system into the occupancy model. Using $S_U$, the set of unsafe states, define a system $\Gamma^\psi_\mathrm{Perf} = (S,A,\iota,P^\psi)$ by the transition function $P^\psi$, where $\delta$ is the Kronecker delta function.

\begin{align}
\label{eq:proof1_Ppsi_def}
    P^\psi(s,a,s') = \begin{cases}
        s \notin S_U & P(s,a,s')\\
        s \in S_U & \delta_{o^{-1}(s),o^{-1}(s')}
    \end{cases}
\end{align}

\begin{lemma}
\label{lem:perf:modified_gamma}
  The probability of a policy $\pi$ being unsafe according to the condition $\psi$ is preserved by the modification $\Gamma^\psi_\mathrm{Perf}$. Formally: \[
  \mathbb{P}_{\ppsysmdp,\pi}^{s}[\unsafe_{S_U,n'}(s)] = \mathbb{P}_{\Gamma_\mathrm{Perf}^\psi,\pi}^{s}[\unsafe_{S_U,n'}(s)] \] 
\end{lemma}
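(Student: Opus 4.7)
The plan is to prove Lemma \ref{lem:perf:modified_gamma} by induction on the horizon $n'$, exploiting the fact that $P$ and $P^\psi$ agree on transitions from safe states and differ only by making unsafe states absorbing. The key insight is that any path whose current state is already in $S_U$ has already satisfied the event $\unsafe_{S_U,n'}$, so the modified dynamics at those states cannot affect the probability of reaching $S_U$ within $n'$ steps.

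First I would let $U^{\mathrm{Perf}}_{n'}(s) := \mathbb{P}_{\ppsysmdp,\pi}^{s}[\unsafe_{S_U,n'}(s)]$ and $U^{\psi}_{n'}(s) := \mathbb{P}_{\Gamma_{\mathrm{Perf}}^\psi,\pi}^{s}[\unsafe_{S_U,n'}(s)]$ and argue that both families of functions satisfy the same Bellman-style recursion with the same base case. The base case $n' = 0$ gives $U^{\mathrm{Perf}}_{0}(s) = U^{\psi}_{0}(s) = \mathbf{1}[s \in S_U]$, since a path is unsafe at horizon zero precisely when its initial state lies in $S_U$.

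For the inductive step, assuming $U^{\mathrm{Perf}}_{n'}(s') = U^{\psi}_{n'}(s')$ for all $s' \in S$, I would split on whether $s \in S_U$. If $s \in S_U$, both quantities equal $1$, since the initial state already witnesses the event regardless of subsequent dynamics. If $s \notin S_U$, then by the definition of $P^\psi$ in Equation~\eqref{eq:proof1_Ppsi_def} we have $P^\psi(s, \pi(s), s') = P(s, \pi(s), s')$ for every $s' \in S$, and decomposing paths according to their first transition yields
\begin{equation*}
U^{\psi}_{n'+1}(s) = \sum_{s' \in S} P^\psi(s, \pi(s), s') \, U^{\psi}_{n'}(s') = \sum_{s' \in S} P(s, \pi(s), s') \, U^{\mathrm{Perf}}_{n'}(s') = U^{\mathrm{Perf}}_{n'+1}(s),
\end{equation*}
using the inductive hypothesis together with the agreement of $P$ and $P^\psi$ at the safe state $s$.

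The main obstacle is rigorously justifying the one-step decomposition of the path-probability measure, which is the standard Markov property for MDPs under stationary policies. To keep the argument self-contained in the paper's setup, I would reduce everything to finite cylinder sets of length $n'+1$ using Equations~\eqref{eq:path_prob} and~\eqref{eq:path_set_prob}, observing that membership in $\unsafe_{S_U,n'}(s)$ depends only on the first $n'+1$ states of a path. With this reduction, the decomposition becomes a direct finite-sum manipulation of path probabilities, and the inductive step goes through without any measure-theoretic overhead beyond what is already implicit in the paper.
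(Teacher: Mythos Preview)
Your proof is correct but takes a genuinely different route from the paper. The paper argues via the complement: it observes that any \emph{safe} path $(s_0,\ldots,s_{n'})$ avoiding $S_U$ consists entirely of transitions from states outside $S_U$, on which $P$ and $P^\psi$ coincide by Equation~\eqref{eq:proof1_Ppsi_def}; hence every safe path has the same probability under both systems, so $\mathbb{P}[\safe_{S_U,n'}]$ agrees, and taking complements gives the lemma directly. Your approach instead proceeds by induction on the horizon via a Bellman-style one-step decomposition of $U_{n'}$. The paper's argument is shorter and avoids setting up any recursion, exploiting in one stroke that the modification only affects transitions \emph{out of} $S_U$, which safe paths never traverse. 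Your argument is more structural and makes the use of the Markov property explicit, whereas the paper's path-product formula~\eqref{eq:path_prob} absorbs that step implicitly; your version would also integrate more naturally if the Bellman recursion for $U_{n'}$ were needed elsewhere.
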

\begin{proof} A sequence of states $p=(s_0,...,s_{n'})$ which does not satisfy $\psi$ must not contain any elements of $S_U$. From \eqref{eq:proof1_Ppsi_def}, $P(s,a,s')$ and $P^\psi(s,a,s')$ must agree when $s,s'\notin S_U$.
Using these facts, and the definition of $\mathbb{P}_{\Gamma,\pi}$ from $\eqref{eq:path_prob}$, the probability measures $\mathbb{P}^{s_0}_{\Gamma_\mathrm{Perf},\pi}(p)$ and $\mathbb{P}^{s_0}_{\Gamma^\psi_\mathrm{Perf},\pi}(p)$ must  also agree. From this we can generalize $p$ to the set of all safe paths of length $n'$:

\[ \safe_{S_U,n'}(s) := \{(s_t)_{t=0}^{\infty} \in \paths_{\sysmdp,\kappa}(\hat{s})~|~\forall t \leq n'.~s_t \notin S_U\}\]
using \eqref{eq:path_set_prob} to guarantee $\mathbb{P}_{\Gamma_\mathrm{Perf},\pi}^{s_0}[\safe_{S_U,n'}]$ is equal to $\mathbb{P}_{\Gamma^\psi_\mathrm{Perf},\pi}^{s_0}[\safe_{S_U,n'}]$. As safety and unsafety are in direct correspondence,

\[\forall \Gamma.~ \mathbb{P}_{\Gamma,\pi}^{s_0}[\safe_{S_U,n'}(s)]=1-\mathbb{P}_{\Gamma,\pi}^{s_0}[\unsafe_{S_U,n'}(s)]\]
this implies Lemma \ref{lem:perf:modified_gamma}. \hfill $\qed$
\end{proof}

Lemma \ref{lem:perf:modified_gamma} shows how $\Gamma^\psi_\mathrm{Perf}$ is a \textit{$\psi$-preserving} transformation in the sense that $\Gamma^\psi_\mathrm{Perf}$ preserves the probability of satisfying $\psi$ on $\Gamma_\mathrm{Perf}$. Lemma \ref{lem:perf:modified_eventually} shows how the satisfaction of $\psi$ is more accessible in $\Gamma^\psi_\mathrm{Perf}$ than in $\Gamma_\mathrm{Perf}$, one only need consider the state occupancy at a given time-step rather than an entire path.

\begin{lemma}
\label{lem:perf:modified_eventually}
    The probability that a path $p \in \textbf{Paths}_{\Gamma^\psi_\mathrm{Perf},\pi}(s)$ satisfies $\psi$ is equal to the probability the final state of $p$ will be in $S_U$.
\end{lemma}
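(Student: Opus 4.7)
The plan is to show that two events over paths in $\Gamma^\psi_\mathrm{Perf}$ coincide as sets of paths, from which equality of their probabilities follows immediately from \eqref{eq:path_set_prob}. The two events in question are $\{p \in \paths_{\Gamma^\psi_\mathrm{Perf},\pi}(s) \mid \exists t \leq n'.~s_t \in S_U\}$ (i.e.\ $p$ satisfies $\psi$) and $\{p \in \paths_{\Gamma^\psi_\mathrm{Perf},\pi}(s) \mid s_{n'} \in S_U\}$ (i.e.\ $p$ ends in $S_U$).

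The key observation is that every state $s \in S_U$ is \emph{absorbing} in $\Gamma^\psi_\mathrm{Perf}$. This is read directly off the second case of \eqref{eq:proof1_Ppsi_def}: for $s \in S_U$ and any action $a$, $P^\psi(s,a,s') = \delta_{o^{-1}(s),o^{-1}(s')}$, so the only successor of positive probability is $s$ itself. Therefore, a trivial induction on $t' - t$ shows that whenever a path $p = (s_0,s_1,\ldots)$ in $\paths_{\Gamma^\psi_\mathrm{Perf},\pi}(s)$ has $s_t \in S_U$, it must satisfy $s_{t'} = s_t \in S_U$ for every $t' \geq t$.

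With this in hand, the set equality follows in two easy directions. The inclusion $\{s_{n'} \in S_U\} \subseteq \{\exists t \leq n'.~s_t \in S_U\}$ is immediate by choosing $t = n'$. For the reverse inclusion, if $s_t \in S_U$ for some $t \leq n'$, the absorbing property applied with $t' = n'$ yields $s_{n'} \in S_U$. Hence the two sets of paths are identical, and summing path probabilities via \eqref{eq:path_set_prob} gives the stated equality of probabilities. I do not anticipate a substantive obstacle: the entire content of the lemma is that unsafe states have been converted into absorbing sinks by the construction of $P^\psi$, so ``reach $S_U$ by step $n'$'' and ``be in $S_U$ at step $n'$'' describe the exact same event. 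The only care needed is to cite \eqref{eq:proof1_Ppsi_def} correctly for the absorbing claim and to handle the degenerate case $t = n'$ in the induction.
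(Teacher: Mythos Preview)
Your proof is correct and follows essentially the same approach as the paper: both arguments hinge on the fact that states in $S_U$ are absorbing in $\Gamma^\psi_\mathrm{Perf}$ by construction \eqref{eq:proof1_Ppsi_def}, so that reaching $S_U$ by step $n'$ and being in $S_U$ at step $n'$ coincide. Your version is slightly more direct in phrasing this as a set equality of events, whereas the paper decomposes the unsafe paths into those ending in $S_U$ and those that enter but leave $S_U$ and argues the latter contributes zero; but the content is identical.
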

\begin{proof} The probability that a path $p \in \textbf{Paths}_{\Gamma^\psi_\mathrm{Perf},\pi}(s)$ satisfies $\psi$ is equal to the sum probability of paths which end in a state in $S_U$ plus the sum probability of paths which enter $S_U$ but end in $\lnot S_U$. Since the probability of leaving $S_U$ in $\Gamma^\psi_\mathrm{Perf}$ is zero, the sum probability of paths which enter $S_U$ but end in $\lnot S_U$ is zero. From this we may conclude Lemma \ref{lem:perf:modified_eventually}.\hfill $\qed$
\end{proof}

Using Lemmas \ref{lem:perf:modified_gamma} and \ref{lem:perf:modified_eventually} we may use the occupancy model of the system $\Gamma^\psi_\mathrm{Perf}$ and the probability that a path ends in $S_U$ to determine $\mathbb{P}_{\ppsysmdp,\pi}^{s}[\unsafe_{S_U,n}(s)]$. First, for a policy $\pi$, define a modified $T_\pi^\psi$ according to the transition function $P^\psi$ using eq. \eqref{eq:T-def}. Second, define a vector $C_{S_U}:\mathbb{R}^S$ \eqref{eq:csu_def} which, when applied through an inner product, $\bracket{C_{S_U},d}$, measures the probability $d\in \mathbb{R}^S$ is in an unsafe state \eqref{eq:proof1-Cprop}. 

\begin{align}
    \label{eq:csu_def}
    (C_X)_i &:= \begin{cases}
           o(i) \in X & 1 \\
           otherwise & 0\\
    \end{cases}~~~ (X \subseteq S) \\
    \bracket{C_X,d} &= \bracket{ \sum_{s\in X} e_{o^{-1}(s)},d} \nonumber \\
    &= \sum_{s\in X} \bracket{e_{o^{-1}(s)},d} \nonumber \\
    \label{eq:proof1-Cprop}
    &= \sum_{s\in X} d_s 
\end{align}
With the definitions of $C_{S_U}$ and $T^\psi_\pi$ we can express Lemma \ref{lem:perf:modified_eventually} more precisely \eqref{eq:lemma_mod_event}.

\begin{align}
\label{eq:lemma_mod_event}
    \mathbb{P}_{\Gamma^\psi_\mathrm{Perf},\pi}^{s}[\unsafe_{S_U,n'}(s)] = C_{S_U}^\top (T_\pi^\psi)^{n'} e_{o^{-1}(s)}
\end{align}

Lemma \ref{lem:perf:modified_gamma} allows us to drop $\psi$ from the left hand side of \eqref{eq:lemma_mod_event}, providing an occupancy interpretation of the measure $\mathbb{P}_{\ppsysmdp,\pi}^{s}[\unsafe_{S_U,n}(s)]$ through equality \eqref{eq:unsafe_redef}. To make later steps easier, we will instead use the equivalent expression \eqref{eq:basis_measure} to provide a lower bound on the probability of avoiding $S_U$.

\begin{align}
    \label{eq:unsafe_redef}
    \mathbb{P}_{\ppsysmdp,\pi}^{s}[\unsafe_{S_U,n'}(s)] &= C_{S_U}^\top (T^\psi_\pi)^{n'} e_{o^{-1}(s)} \\
    1-\mathbb{P}_{\ppsysmdp,\pi}^{s}[\safe_{S_U,n'}(s)] &= 1-C_{\lnot S_U}^\top (T^\psi_\pi)^{n'} e_{o^{-1}(s)} \nonumber \\
    \label{eq:basis_measure}
    \mathbb{P}_{\ppsysmdp,\pi}^{s}[\safe_{S_U,n'}(s)] &= C_{\lnot S_U}^\top (T^\psi_\pi)^{n'} e_{o^{-1}(s)}
\end{align}

Using eq. \eqref{eq:basis_measure} as a basis, we incrementally rebuild the previous constructions from Section \ref{sec:PP_MDP_Shielded} using this occupancy formulation. First, define a function $\xi_n$ \eqref{eq:xi_def} which represents the maximum probability of a path starting from $o(i)$ and avoiding $S_U$ within $n$ time-steps. Then, we may construct a vector $\Xi_n : \mathbb{R}^S$ \eqref{eq:Xi_def} from the optimums defined by $\xi_n$. In this way, the inner product $\bracket{\Xi_n,d}$ determines the probability of a system with occupancy vector $d$ avoiding $S_U$ in $n$ time-steps.

\begin{align}
    \label{eq:xi_def}
    \xi_n(e_i) &= \max_{\pi:S\rightarrow A} [C_{\lnot S_U} (T^\psi_{\pi})^n e_i] \\
    \label{eq:Xi_def}
    (\Xi_n)_i &= \xi_n(e_i)
\end{align}
From \eqref{eq:Xi_def} we may translate the definition of $\sigma_\psi(s,a)$ \eqref{eqn:sigma_def} into the occupancy model \eqref{eq:sigma_redef}. While the definition of $\Delta^n_\lambda$ remains the same, it is worth noticing that our perspective has changed to provide a lower bound as shown by eq. \eqref{eq:Delta_def}. In this and following equations $\lambda'=1-\lambda$.

\begin{align}
\label{eq:sigma_redef}
    1-\sigma^n_\psi(s,a) &= \Xi_{n}^\top P(s,a)\\
\label{eq:Delta_def}
    \Delta_\lambda^n(s) &= \{a\in A ~|~ 1-\sigma^n_\psi(s,a) \ge \lambda'\} 
\end{align}
To construct our proof it will be useful to define a restriction of the set $S$ to only those states which have safe actions according to $\Delta_\lambda^n$. We call this restriction $S_\Delta^{\lambda,n}$ and define it in eq. \eqref{eq:SDelta_def}.
\begin{align}
\label{eq:SDelta_def}
    S_\Delta^{\lambda,n} &= \{s \in S ~|~ \Delta_\lambda^n(s) \ne \emptyset \}
\end{align}

We would like to define a restriction of the type $S\rightarrow A$ which only includes control policies which obey shielding. One might do so by restricting the codomain to only the set $\Delta_\lambda^n$. However, this would prevent control policies from being total functions since $\forall s\notin S_\Delta^{\lambda,n}.~ \Delta_\lambda^n(s)=\emptyset$. A control policy $\pi$ must be a total function on $S$ for eq. \eqref{eq:T-def} to be well defined. To extend shielded controllers as total functions, we make no assumption about the action selected in states where the shield is empty.\footnote{Actions taken from unshielded states do not effect our analysis because of the \textit{no stuck} assumption.} This is demonstrated in \eqref{eq:bar_delta} and allows us to construct a dependent type $K_{\bar{\Delta}^n_\lambda}$ \eqref{eq:K_def} describing the behavior of shielded controllers in the loosest possible terms. 

\begin{align}
    \label{eq:bar_delta}
    \bar{\Delta}^n_\lambda &:= \begin{cases}
        s \in S_\Delta^{\lambda,n} & \Delta^n_\lambda \\
        otherwise & A
    \end{cases} \\
    \label{eq:K_def}
    K_{\bar{\Delta}^n_\lambda} &:= (s:S_\Delta^{\lambda,n}) \rightarrow \bar{\Delta}^n_\lambda(s)
\end{align}

Finally, we show how to construct a mask for a subset of states $X$. $M_X$ \eqref{eq:M_def} is a restriction of the identity matrix to only elements in $X$. As a function, $M_X$ passes through the probabilities of states within $X$ while filtering those outside of $X$. 
\begin{align}
\label{eq:M_def}
    (M_X)_{i,j} = \begin{cases}
        o(j) \in X & \delta_{i,j} \\
        otherwise & 0 
    \end{cases}
\end{align}

\subsection{Proof over Occupancy Construction}

\subsubsection{Overview}
\label{sec:proof_1_overview}

We would like to prove a greatest lower bound on the safety of a successive application of a series of shielded control policies. While the expression $(T^\psi_\pi)^n e_{o^{-1}(s)}$ represents the successive application of a single control policy $\pi$, we do not rely on the guarantee the same control policy is applied at each step. Rather, we merely assume the control policy applied at time-step $i$ is shielded: $\pi_i \in K_{\bar{\Delta}^n_\lambda}$. Using this weakened assumption we postulate the following where $f$ is some function of $\lambda$ and $n'$ satisfying \eqref{eq:occ_proof_postul}.

\begin{align}
\label{eq:occ_proof_postul}
    \forall i.~ \pi_i \in K_{\bar{\Delta}^n_\lambda} \rightarrow C_{\lnot S_U} \left( \Pi_{i=0}^{n'-1} T^\psi_{\pi_i} \right) e_{o^{-1}(s)} \ge f(\lambda,n')
\end{align}

The following proof finds $f^*$, the maximal $f$ which satisfies a modification of \eqref{eq:occ_proof_postul}. The modification is required, because, for $f^*(\lambda,n')$ to be greater than 0, it is necessary to make several additional assumptions. These are the \textit{no stuck} and \textit{initially safe} assumptions, which will be precisely defined below. To find $f^*$ we construct an inductive argument over $d_i$: the occupancy vector at time-step $i$. By proving that the \textit{no stuck} assumption implies an inductive hypothesis of the form $g(d_{i+1}) \ge \lambda' g(d_i)$  (where $g$ is a specific function of occupancy vectors that we define), and assuming an inductive base case, $g(d_0)=1$, we show that $g(d_{n'})=(\lambda')^{n'}$ for a path of length $n'$. Combining this with the property that $g(d)$ is a lower bound on the safety of $d$ allows us to conclude $f(\lambda,n') = (\lambda')^{n'}$. We show this lower bound is maximal $f^*(\lambda,n')=(\lambda')^{n'}$ in Section \ref{sec:proof_1_worst_case}.

\subsubsection{An Initial Inequality}

In the following steps we show how the use of a shielded control policy $\pi_i \in K_{\bar{\Delta}^n_\lambda}$ produces an inequality between the current occupancy state $d_i$ and the next occupancy state $d_{i+1}=T^\psi_{\pi_i} d_i$. 

\begin{prooftree}
    \AxiomC{$\forall i.~\pi_i : K_{\bar{\Delta}^n_\lambda}$  {\color{red}(p1.1)}}
    \UnaryInfC{$\forall s_\Delta \in S^{\lambda,n}_\Delta.~ \Xi_n^\top T^\psi_{\pi_i} e_{s_\Delta} \ge \lambda'$ {\color{red}(p1.2)}}
    \UnaryInfC{$\Xi_n^\top T_{\pi_i}^\psi M_{S^{\lambda,n}_\Delta} \ge \lambda' C_{S^{\lambda,n}_\Delta}^\top$ {\color{red}(p1.3)}}
    \UnaryInfC{$\Xi_n^\top T_{\pi_i}^\psi M_{S^{\lambda,n}_\Delta}d_i \ge \lambda' C_{S^{\lambda,n}_\Delta}^\top d_i$ {\color{red}(p1.4)}}
    \AxiomC{$\Xi_n^\top T_{\pi_i}^\psi d_i \ge \Xi_n^\top T_{\pi_i}^\psi M_{S^{\lambda,n}_\Delta}d_i$ {\color{red}(p1.5)}}
    \BinaryInfC{$\Xi_n^\top T_{\pi_i}^\psi d_i \ge \lambda' C_{S^{\lambda,n}_\Delta}^\top d_i$ {\color{red}(p1.6)}}
    \UnaryInfC{$\Xi_n^\top d_{i+1} \ge \lambda' C_{S^{\lambda,n}_\Delta}^\top d_i$ {\color{red}(p1.7)}}
\end{prooftree}
The first step, {\color{red}(p1.2)}, follows from {\color{red}(p1.1)} by applying the definitions of $K_{\bar{\Delta}^n_\lambda}$, $\bar{\Delta}^n_\lambda$, and $\sigma^n_\psi$. {\color{red}(p1.2)} describes a system of linear equations over all states in $S_\Delta^{\lambda,n}$. By extending this system with the inequality $\Xi_n^\top T_{\pi_i}^\psi  \mathbf{0} \ge \lambda' 0$ for all states not in $S_\Delta^{\lambda,n}$ we may construct the system of equations in  {\color{red}(p1.3)}. Right multiplication of {\color{red}(p1.3)} by $d_i$ produces {\color{red}(p1.4)}. Finally, {\color{red}(p1.5)} allows us to weaken the left hand side of {\color{red}(p1.4)} to find the definition of $d_{i+1}$ in {\color{red}(p1.6)}, which is replaced in {\color{red}(p1.7)}.

The inequality $\Xi_n^\top T_{\pi_i}^\psi d_i \ge \Xi_n^\top T_{\pi_i}^\psi M_{S^{\lambda,n}_\Delta}d_i$ from {\color{red}(p1.5)} is non obvious and deserves justification. In the following proof we can see how this inequality follows from a simple property of inner products interacting with masks {\color{red}(p2.1)}. The additional axiom in this proof $\bracket{\Xi_n^\top T^\psi_{\pi_i}, M_{\lnot S^{\lambda,n}_\Delta} d_i} \ge 0$ is an implication of the non-negative domain of probabilities.

\begin{prooftree}\smaller
    \AxiomC{$\bracket{y,z} = \bracket{y,M_X z} + \bracket{y,M_{\lnot X} z} $ {\color{red}(p2.1)}}
    \UnaryInfC{$\bracket{\Xi_n^\top T^\psi_{\pi_i}, d_i} = \bracket{\Xi_n^\top T^\psi_{\pi_i}, M_{S^{\lambda,n}_\Delta} d_i} + \bracket{\Xi_n^\top T^\psi_{\pi_i}, M_{\lnot S^{\lambda,n}_\Delta} d_i} $}
    \AxiomC{$\bracket{\Xi_n^\top T^\psi_{\pi_i}, M_{\lnot S^{\lambda,n}_\Delta} d_i} \ge 0$}
    \BinaryInfC{$\bracket{\Xi_n^\top T^\psi_{\pi_i}, d_i} \ge \bracket{\Xi_n^\top T^\psi_{\pi_i}, M_{S^{\lambda,n}_\Delta} d_i}$  }
    \UnaryInfC{$\Xi_n^\top T_{\pi_i}^\psi d_i \ge \Xi_n^\top T_{\pi_i}^\psi M_{S^{\lambda,n}_\Delta}d_i$ {\color{red} (p2.2)}}
\end{prooftree}
At the end of the {\color{red}(p1)}  we were left with the following inequality.

\begin{align}
\label{eq:proof1_init_ineq}
    \Xi_n^\top d_{i+1} \ge \lambda' C_{S^{\lambda,n}_\Delta}^\top d_i
\end{align}
To construct an inductive proof over $d_i$ we would like to match the left and right hand sides of \eqref{eq:proof1_init_ineq} to say $C_{S^{\lambda,n}_\Delta}^\top d_{i+1} \ge \lambda' C_{S^{\lambda,n}_\Delta}^\top d_i$. By assigning $g(d) := C_{S^{\lambda,n}_\Delta}^\top d$, this matching inequality will provide us with the inductive hypothesis of the form $g(d_i) \ge \lambda' g(d_{i+1})$ as promised in Section \ref{sec:proof_1_overview}.

\subsubsection{The Inductive Hypothesis}
\label{sec:proof_1_inductive_hyp}

Eq. \eqref{eq:proof1_init_ineq} leaves us pondering under what conditions $C_{S^{\lambda,n}_\Delta}^\top d \ge \Xi_n^\top d$. Comparing $C_{S^{\lambda,n}_\Delta}$ and $\Xi_n$, there are several cases where we can be assured that $(C_{S^{\lambda,n}_\Delta})_i \ge (\Xi_n)_i$. If $o(i)\in S^{\lambda,n}_\Delta$ then $(C_{S^{\lambda,n}_\Delta})_i=1$ which certainly must be greater than or equal to $(\Xi_n)_i$. Similarly, if $o(i) \in S_U$ then both $(C_{S^{\lambda,n}_\Delta})_i$ and $(\Xi_n)_i$ must be zero, again satisfying our inequality. The remaining cases of concern are therefore the remainder of $S$, which we will call $S^{\lambda,n}_\nabla$ \eqref{eq:S_nab_def}. This set consists of states which do not have safe actions according to $\Delta^n_\lambda$ but neither are they unsafe. Let us call these states \textit{stuck}. 
These states pose a major challenge to analysis because a shielded controller, as defined by eq. \eqref{eq:K_def}, may take any action from a state in $S_{\nabla}^{\lambda,n}$. This prevents the construction of any lower bound on the safety of actions taken from $S_{\nabla}^{\lambda,n}$. Furthermore, the probability of encountering stuck states cannot be bounded in the general case (Fig. \ref{fig:scarbros_demon}). Therefore, to construct a lower bound on the safety of the system as a whole, we make the \textit{no stuck} assumption: eq. \eqref{eq:no_stuck}. In words, this assumption requires that stuck states are not reachable from the initial state by a shielded controller.

\begin{align}
\label{eq:S_nab_def}
    S_{\nabla}^{\lambda,n} &= S \setminus (S_{\Delta}^{\lambda,n} \cup S_U) \\
\label{eq:no_stuck}
    \forall i.~ C_{S_\nabla^{\lambda,n}}^\top d_i &= 0
\end{align}
From the above discussion we may conclude $C_{S_\Delta^{\lambda,n}}^\top M_{\lnot S_\nabla^{\lambda,n}} \ge \Xi_n^\top M_{\lnot S_\nabla^{\lambda,n}}$. Combining this with the \textit{no stuck} assumption allows to conclude {\color{red}(p3.2)}.

\begin{prooftree}\smaller
    \AxiomC{$C_{S_\nabla^{\lambda,n}}^\top d_i = 0$ {\color{red}(p3.1)}}
    \UnaryInfC{$C_{S_\Delta^{\lambda,n}}^\top M_{\lnot S_\nabla^{\lambda,n}} d_i = C_{S_\Delta^{\lambda,n}}^\top  d_i$  $~~\Xi_n^\top M_{\lnot S_\nabla^{\lambda,n}} d_i = \Xi_n^\top  d_i$}
    \AxiomC{$C_{S_\Delta^{\lambda,n}}^\top M_{\lnot S_\nabla^{\lambda,n}} \ge \Xi_n^\top M_{\lnot S_\nabla^{\lambda,n}}$}
    \BinaryInfC{$C_{S_\Delta^{\lambda,n}}^\top d_i \ge \Xi_n^\top d_i$ {\color{red}(p3.2)}}
\end{prooftree}
The following summarizes the reasoning used above to convert the initial inequality {\color{red}(p1.7,p4.2)} into the inductive hypothesis {\color{red}(p4.2)} using the \textit{no stuck} assumption {\color{red}(p4.1)}.

\begin{prooftree}
    \AxiomC{$C_{S_\nabla^{\lambda,n}}^\top d_{i+1} = 0$ {\color{red}(p4.1)}}
    \UnaryInfC{$C_{S^{\lambda,n}_\Delta}^\top d_{i+1} \ge \Xi_n^\top d_{i+1}$}
    \AxiomC{$\Xi_n^\top d_{i+1} \ge \lambda' C_{S^{\lambda,n}_\Delta}^\top d_i$ {\color{red}(p4.2)}}
    \BinaryInfC{$C_{S^{\lambda,n}_\Delta}^\top d_{i+1} \ge \lambda' C_{S^{\lambda,n}_\Delta}^\top d_i$ {\color{red}(p4.2)}}
\end{prooftree}

\subsubsection{A Complete Proof}
Finally, an inductive proof of the safety of a perfect-perception model is given below. To begin the induction, we assume the initial state $\iota$ is in $S^{n,\lambda}_{\Delta}$ so that $C_{S^{n,\lambda}_{\Delta}}^\top d_0 = 1$, which we call the \textit{initially safe} assumption.

\begin{prooftree}
    \AxiomC{$\forall i.~ \pi_i : K_{\bar{\Delta}^n_\lambda}$ {\color{red}(p6.1)}}
    \UnaryInfC{$\Xi_n^\top d_{i+1} \ge \lambda C_{S^{\lambda,n}_\Delta}^\top d_i$ {\color{red}(p6.2)}}
    \AxiomC{$\forall i.~ C_{S_\nabla^{\lambda,n}}^\top d_i = 0$ {\color{red}(p6.3)}}
    \BinaryInfC{$\forall i.~ C_{S^{\lambda,n}_\Delta}^\top d_{i+1} \ge \lambda' C_{S^{\lambda,n}_\Delta}^\top d_i$ {\color{red}(p6.4)}}
    \AxiomC{$C_{S^{\lambda,n}_\Delta}^\top d_{0}=1$ {\color{red}(p6.5)}}
    \BinaryInfC{$C_{S^{\lambda,n}_\Delta}^\top d_{n'} \ge (\lambda')^{n'}$ {\color{red}(p6.6)}}
    \UnaryInfC{$C_{\lnot S_U}^\top d_{n'} \ge (\lambda')^{n'}$ {\color{red}(p6.7)}}
    \UnaryInfC{$C_{S_U}^\top d_{n'} \le 1-(1-\lambda)^{n'}${\color{red}(p6.8)}}
    \UnaryInfC{$C_{S_U}^\top \Pi_{i=0}^{n'} (T^\psi_{\pi_i}) d_0 \le 1-(1-\lambda)^{n'}${\color{red}(p6.9)}}
\end{prooftree}

Proof {\color{red}(p6)} summarizes the reasoning developed over the previous sections, combining each piece to give a full picture of the proof of Theorem \ref{thm:global}. To start, {\color{red}(p6.1)} implies {\color{red}(p6.2)} using {\color{red}(p1)}. Then, {\color{red}(p6.2)} and the no \textit{no stuck} assumption {\color{red}(p6.3)} are combined to produce {\color{red}(p6.4)} using {\color{red}(p4)}. From {\color{red}(p6.4)} we use the \textit{initially safe} assumption {\color{red}(p6.5)} and induction to conclude {\color{red}(p6.6)}. Then, {\color{red}(p6.6)} may be weakened using the property $S_{\Delta}^{\lambda,n} \subseteq \lnot S_U$ to conclude {\color{red}(p6.7)}. Finally, {\color{red}(p6.8)} follows from the law of total probabilities, and we replace $d_{n'}$ with its definition to conclude {\color{red}(p6.9)}.

{\color{red}(p6)} is a stronger statement than is required to prove Theorem \ref{thm:global}. Under the \textit{no stuck} and \textit{initially safe} conditions, the implication {\color{red}(p6.1)} $\rightarrow$ {\color{red}(p6.9)} may be weakened by applying the same policy $\pi \in K_{\Delta_\lambda^n}$ at each time-step \eqref{eq:thm1-eq}. 

\begin{align}
\label{eq:thm1-eq}
    \pi \in K_{\bar{\Delta}_\lambda^n} \rightarrow C_{S_U}^\top (T^\psi_{\pi})^n d_0 \le 1-(1-\lambda)^{n'}
\end{align}
Using \eqref{eq:unsafe_redef}, eq. \eqref{eq:thm1-eq} may be restated as \eqref{eq:thm1-eq2}.

\begin{align}
\label{eq:thm1-eq2}
    \pi \in K_{\bar{\Delta}_\lambda^n} \rightarrow \mathbb{P}_{\Gamma_{Perf,\pi}}[\unsafe_{S_U,n'}(s)] \le 1-(1-\lambda)^{n'}
\end{align}
Eq. \eqref{eq:thm1-eq2} is precisely the statement of Theorem \ref{thm:global} because the construction of $\Gamma_\mathrm{Perf}^\Delta$ restricts $\pi$ to the set of shielded policies $K_{\bar{\Delta}_\lambda^n}$.\hfill $\qed$

\subsection{Theoretical Case Studies}
\subsubsection{Worst Case Safety}
\label{sec:proof_1_worst_case}

It is relatively simple to demonstrate the lower bound on safety $f(\lambda,n') = (\lambda')^{n'}$ is maximal. Figure \ref{fig:worst_case_safety} provides an example of a system where repeated application of a shielded controller $\pi^* = \lambda. s_1 \rightarrow a_1$ achieves the lower bound on safety $(\lambda')^{n'}$. In this example $S_U=\{s_0\}$. Using the ordering $o=(s_1,s_0)$, we may compute $\Xi_n=(1,0)$ and $\Xi_n^\top T^\psi_{\pi^*}=(\lambda',0)$, where the choice of $n$ is inconsequential. This shows that $\pi^*\in K_{\bar{\Delta}_\lambda^n}$. Starting from $s_1$ and applying $\pi^*$ $n'$ times achieves the lower bound on safety \eqref{eq:worst_case}.
\begin{align}
\label{eq:worst_case}
    C_{\lnot S_U}^\top (T^\psi_{\pi^*})^n e_0 = (\lambda')^{n'}
\end{align}
\begin{figure}[t]
\centering
    \includegraphics[width=150pt]{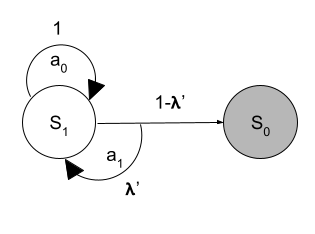}
    \caption{A simple example demonstrating worst case safety.}
    \label{fig:worst_case_safety}
\end{figure}

\subsubsection{Entrapment}
\label{sec:entrap}

\begin{figure}[t]
\centering
    \includegraphics[width=330pt]{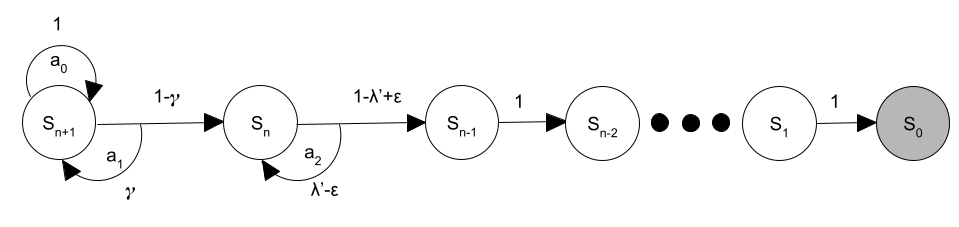}
    \caption{An example demonstrating the maximal stuck likelihood.}
    \label{fig:scarbros_demon}
    
\end{figure}

In the context of the relatively strong statement concerning shielded, perfect-perception models under the \textit{no stuck} assumption, one might wish to bound the probability of reaching a stuck state. Unfortunately, this cannot be done in the general case as demonstrated by the following case study. In Fig. \ref{fig:scarbros_demon} the probabilities of transitioning between states can be contrived so there exists a shielded controller which is certain to get stuck. For a shielded controller with shielding parameter $\lambda$ and look ahead $n$, we will show the worst case stuck probability by providing variable assignments which maximize $\bracket{C_{S^{\lambda,n}_\nabla},d_1}$ given $d_0=e_{o(s_{n+1})}$ and $S_U = \{s_0\}$. 

To start, for $\varepsilon > 0$, $S^{\lambda,n}_\Delta=\{s_{n+1}\}$. We would like to choose the minimal $\gamma$ so that $a_1\in \Delta_\lambda^n(s_{n+1})$ because minimizing $\gamma$ under this constraint will maximize the probability of reaching a stuck state. This is achieved by setting $\gamma=\varepsilon / (1-\lambda'-\varepsilon)$. If $a_1\in \Delta_\lambda^n(s_{n+1})$ then $\pi^* = \lambda.~ s_{n+1} \rightarrow a_1$ has type $K_{\bar{\Delta}_\lambda^n}$. These assignments imply the following.

\begin{align}
    \lim_{\varepsilon\rightarrow 0} C_{S^{\lambda,n}_\nabla}^\top T^\psi_{\pi^*} d_0 &= \lim _{\varepsilon\rightarrow 0} 1-\frac{\varepsilon}{1-\lambda'-\varepsilon} \nonumber \\
    &= 1
\end{align}

\section{Perfect-Perception MDP for Autonomous Taxiing}
\label{sec:prism_model}

\lstinputlisting[language={Prism}, numbers=left, rulesepcolor=\color{black}, rulecolor=\color{black}, breaklines=true, breakatwhitespace=true, firstnumber=1, firstline=1, 
caption={Perfect-Perception MDP},
label={mdp_prism5}]{prism_files/taxinet_Boeing5bins_perfect_shielding_stochdyn.pm}

\end{document}